\newcommand{\bc}{{\bf c}}
\newcommand{\bx}{{\bf x}}
\newcommand{\rk}{\rm{rank}}
\begin{document}
\title{Entanglement-assisted Quantum Codes from Cyclic Codes}
\titlerunning{QUENTA codes from Cyclic Codes}
%
\author{Francisco Revson F. Pereira\inst{1,2}}
\authorrunning{Francisco Revson F. Pereira}
%
\institute{Department of Mathematics and Computing Science, Eindhoven University of Technology, Eindhoven, The Netherlands. \and
	   Department of Electrical Engineering, Federal University of Campina Grande, Campina Grande, Para\'iba, Brazil.\\
\email{revson.ee@gmail.com}}
%
\maketitle              
\begin{abstract}

Entanglement-assisted quantum (QUENTA) codes are a subclass of quantum error
correcting codes which use entanglement as a resource. These codes can provide
error correction capability higher than the codes derived from the traditional stabilizer formalism.
In this paper, it is shown a general method to construct QUENTA codes
from cyclic codes. Afterwards, the method is applied to Reed-Solomon codes, BCH codes,
and general cyclic codes. We use the Euclidean and Hermitian construction of QUENTA codes.
Two families of QUENTA codes are maximal distance separable (MDS), and one is almost MDS or
almost near MDS. The comparison of the codes in this paper is mostly based on the quantum
Singleton bound.

\keywords{Quantum Codes \and Reed-Solomon Codes \and BCH Codes \and Maximal Distance Separable
\and Maximal Entanglement.}\newline
\noindent {\bf MSC: }81P70, 81P40, 94B15, 94B27.

\end{abstract}
%
%
%


\section{Introduction}
\label{sec:Introduction}
\noindent

Practical implementations of most quantum communication schemes and quantum computers will only be
possible if such systems incorporate quantum error correcting codes to them. Quantum error correcting codes
restore quantum states from the action of a noisy quantum channel. One of the most known and used methods to
create quantum codes from classical block codes is the CSS method \cite{NielsenChuang:Book}. Unfortunately, it requires (Euclidean or
Hermitian) duality containing to one of the classical codes used. One way to overcome this constraint is via 
entanglement. It is also possible to show that entanglement also improves the error-correction capability of quantum codes.
These codes are called Entanglement-Assisted Quantum (QUENTA) codes, also denoted by EAQEC codes in the literature. The first
proposals of QUENTA codes were done by Bowen \cite{Bowen:2002} and Fattal, \emph{et al.} \cite{Fattal:2004}. In the following,
Brun \emph{et al.} \cite{Brun:2006} have developed an entanglement-assisted stabilizer formalism for these codes, which were recently generalized
by Galindo, \emph{et al.} \cite{Galindo:2019}.

This formalism has created a method to construct QUENTA codes from classical block codes, which has lead to
the construction of several families of QUENTA codes \cite{Wilde:2008,Fan:2016,Chen:2017,Lu:2018,Chen:2018,Chen:2017,Lu:2017,Guenda:2018,Liu:2019}.
The majority of them utilized constacyclic codes \cite{Fan:2016,Chen:2018,Lu:2018} or
negacyclic codes \cite{Chen:2017,Lu:2018} as the classical counterpart.
However, only a few of them have used cyclic codes and described the parameters of the quantum code constructed via
the defining set of cyclic code. This can lead to a straightforward relation between the parameters of
the classical and quantum codes, and a method to create MDS QUENTA code. Li \emph{et al.} used BCH codes to
construct QUENTA codes via decomposing the defining set of the BCH code used \cite{Li:2011}. Lu and Li constructed
QUENTA codes from primitive quaternary BCH codes \cite{LuLi:2014}. Recently, Lu et al. \cite{Lu:2018}, using not cyclic but
constacyclic MDS codes as the classical counterpart, proposed four families of MDS QUENTA codes.

The main goal of this paper is to describe any cyclic code, such as Reed-Solomon and BCH codes, under the same framework via defining
set description, and to show, using two classical codes from one of these families, how to construct QUENTA codes from them.
We have used the Euclidean and Hermitian methods to construct QUENTA codes. And as it will be shown, QUENTA codes from Reed-Solomon codes
are MDS codes and the ones from BCH codes are new in two senses. The first one is that there is no work in the literature with the same parameters.
The second is that we are using two BCH codes to derive the QUENTA code, this gives more liberty in the choice of parameters.
Two more families of QUENTA codes are constructed using the Hermitian construction. One of these families can generate codes which
are almost MDS or almost near MDS; i.e., the Singleton defect for these codes is equal to one or two units. The last family created is
maximal entangled and have length proportional to a high power of the cardinality of the field, which make them suitable to achieve
the hashing bound \cite{Li:2014}. Lastly, we would like to highlight that the description made in this paper gives a more direct
relation between cyclic codes and the entanglement-assisted quantum codes constructed from them. Furthermore, such relation can be
extended to constacyclic and negacyclic codes with a few adjustments.

The paper is organized as follows. In Section~\ref{sec:Preliminaries}, we review Reed-Solomon and BCH codes and
describe their parameters via defining set. Additionally, it is shown construction methods of QUENTA codes from
classical codes. Using these methods to cyclic classical, new QUENTA codes are
constructed in Section~\ref{sec:NewConstructions}. In Section~\ref{sec:codComp}, a comparison of these codes with
the quantum Singleton bound is covered. In particular,
it is shown families of MDS and almost MDS QUENTA codes.
We also create a family of QUENTA codes which can be applied to achieve the hashing bound \cite{Li:2014}.
Lastly, the conclusion is carried out in Section~\ref{sec:Conclusion}.

\emph{Notation.} Throughout this paper, $p$ denotes a prime number and $q\neq 2$ is a power of $p$.
Let $\mathbb{F}_q$ be the finite field with $q$ elements. A linear code $C$ with parameters $[n,k,d]_q$
is a $k$-dimensional subspace of $\mathbb{F}_q^n$ with minimum distance $d$. For cyclic codes,
$Z(C)$ denotes the defining set and $g(x)$ is the generator polynomial.
Lastly, an $[[n,k,d;c]]_q$ quantum code is a $q^k$-dimensional subspace
of $\mathbb{C}^{q^n}$ with minimum distance $d$ that utilizes $c$ pre-shared
entangled pairs.


\section{Preliminaries}
\label{sec:Preliminaries}
\noindent
In this section, we review some ideas related to linear complementary dual (LCD) codes, cyclic codes,
and entanglement-assisted quantum (QUENTA) codes. As it will be shown, LCD codes give an important source
to construct QUENTA codes with interesting properties, such as maximal distance separability and maximal
entanglement (see Section~\ref{sec:NewConstructions}). But before giving a description of LCD codes,
we need to define the Euclidean and Hermitian dual of a linear code.


\begin{definition}
%
    Let $C$ be a linear code over $\mathbb{F}_q$ with length $n$. The (Euclidean) dual of $C$ is defined as

    \begin{equation}
    C^\perp = \{ \ \bx \in \mathbb{F}_q^n \ | \ \bx\cdot \bc =0 \mbox{ for all } \bc \in C \}.
    \end{equation}
    If the finite field has cardinality equal to $q^2$, an even power of a prime,
    then we can define the Hermitian dual of $C$.
    This dual code is defined by

    \begin{equation}
	C^{\perp_h} = \{ \ \bx \in \mathbb{F}_{q^2}^n \ | \ \bx\cdot \bc^q =0 \mbox{ for all } \bc \in C \ \},
    \end{equation}where $\bc^q = (c_1^q, \ldots, c_n^q)$ for $\bc \in\mathbb{F}_{q^2}^n$.

%
\end{definition}

These types of dual codes can be used to derive quantum codes from the stabilizer formalism \cite{NielsenChuang:Book}.
The requirement in this formalism is to the classical code to be self-dual; i.e.,
$C\subseteq C^\perp$ or $C\subseteq C^{\perp_H}$. However, there is
a different relationship between a code and its (Euclidean or Hermitian) dual that it can be interesting in the construction
of QUENTA codes. This relation is complementary duality and is defined in the following.

\begin{definition}
    The hull of a linear code $C$ is given by $hull (C) = C^\perp \cap C$. The code is called \rm{linear complementary
    dual} (LCD) code if the hull is trivial; i.e, $hull(C) = \{\bf{0}\}$. Similarly, it is defined
    $hull_H (C) = C^{\perp_h} \cap C$ and the idea of Hermitian LCD code.
\end{definition}

Now, we can define cyclic codes and some properties that can be used to extract the parameters of the quantum code
constructed from them.

%
%

\subsection{Cyclic codes}

A linear code $C$ with parameters $[n,k,d]_q$ is called cyclic if for any codeword
$(c_0, c_1, \ldots, c_{n-1})\in C$ implies
$(c_{n-1}, c_0, c_1, \ldots, c_{n-2})\in C$. Defining a map from $\mathbb{F}_q^n$ to
$\mathbb{F}_q[x]/(x^n-1)$, which takes $\bc = (c_0, c_1, \ldots, c_{n-1})\in \mathbb{F}_q^n$ to
$c(x) = c_0 + c_1 x + \cdots + c_{n-1}x^{n-1}\in \mathbb{F}_q[x]/(x^n-1)$, we can see that a linear code
$C$ is cyclic if and only if it corresponds to an ideal of the ring $\mathbb{F}_q[x]/(x^n-1)$.
Since that any ideal in $\mathbb{F}_q[x]/(x^n-1)$ is principal, then any cyclic code $C$ is generated
by a polynomial $g(x)|(x^n-1)$, which it is called generator polynomial. This polynomial is monic
and has the smallest degree among all the generators of $C$.

A characterization of the parameters of a cyclic code can be given from the generator polynomial and
its defining set. For the description of this set, consider the following: Let
$m = ord_n(q)$, $\alpha$ be a generator of the multiplicative group $\mathbb{F}_{q^m}^*$, and
assume $\beta = \alpha^{\frac{q^m - 1}{n}}$; i.e., $\beta$ is a primitive $n$-th root of unity. Then
the defining set of $C$, which is denoted by $Z(C)$, is defined as
$Z(C) = \{i\in\mathbb{Z}_n\colon c(\beta^i) = 0\text{ for all }c(x)\in C\}$.

BCH and Reed-Solomon codes are particular cases of cyclic codes, where the generator polynomial has some
additional properties. See Definitions~\ref{Definition:BCH}~and~\ref{Definition:RS}.

\begin{definition}
            Let $b\geq 0$, $\delta \geq 1$, and $\alpha\in\mathbb{F}_{q^m}$, where $m = ord_n(q)$. A cyclic code $C$
            of length $n$ over $\mathbb{F}_q$ is a BCH code with designed
            distance $\delta$ if

            \begin{equation*}
                g(x) = \text{lcm}\{m_{b}(x), m_{b+1}(x), \ldots, m_{b+\delta -2}(x)\}
            \end{equation*}where $m_{i}(x)$ is the minimal polynomial of $\alpha^i$ over
            $\mathbb{F}_q$.If $n = q^m - 1$ then the BCH code is called primitive, and if $b = 1$ it is
            called narrow sense.
            \label{Definition:BCH}
\end{definition}

Before relating the parameters of an BCH code with the defining set, we need to introduce the idea of cyclotomic
coset. It comes from the observation that the minimal polynomial $m_i(x)$ of $\alpha^i$ can be the minimal polynomial
of other powers of $\alpha$. The reason for this is that $\alpha$ belongs to an extension of $\mathbb{F}_{q}$ while the
polynomial $m_i(x)\in\mathbb{F}_q[x]$. The set of all zeros of $m_i(x)$ in the field $\mathbb{F}_{q^m}$ is given by
the cyclotomic coset of $i$. Thus, the defining set of a BCH code $C$ is the union of the cyclotomic cosets of 
$b, b+1, \ldots, b+\delta - 2$. The following definition describes this set.

\begin{definition}
The $q$-ary cyclotomic coset $\mod n$ containing an element $i$ is defined by

\begin{equation}
\mathbb{C}_i =\{i,iq,iq^2,iq^3, \ldots,iq^{m_i-1}\},
\end{equation}where $m_i$ is the smallest positive integer such that $iq^{m_i} \equiv i \mod n$.
\end{definition}

For the parameters of a BCH code, it is shown that the dimension is equal to $n-|Z(C)|$ and the minimal distance
of $C$ is at least $\delta$ \cite{Ruud:Book}. Thus, we can see that important properties of an BCH codes can be
obtained from the defining set. The same characterization happens with Euclidean or Hermitian dual
cyclic code. Propositions~\ref{Cyclic_Euclidean_Dual}~and~\ref{Cyclic_Hermitian_Dual}
focus on this.

\begin{proposition}\cite[Proposition 4.3.8]{Ruud:Book}
            Let $C$ be a linear code of length $n$ and defining set $Z(C)$. Then the defining
            set of $C^\perp$ is given by

            \begin{equation*}
                        Z(C^\perp) = \mathbb{Z}_n\setminus\{-i| i\in Z(C)\}
            \end{equation*}For BCH codes, the generator polynomial is given by the lcm of
            the minimal polynomials over $\mathbb{F}_q$ of the elements $\alpha^j$ such that $j\in Z(C^\perp)$.
            \label{Cyclic_Euclidean_Dual}
\end{proposition}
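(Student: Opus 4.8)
The plan is to prove the description of the defining set $Z(C^\perp)$ by working directly with the characterization of cyclic codes through their roots, and to deduce the generator polynomial statement as a corollary. Recall that $C$ has generator polynomial $g(x) = \prod_{i \in Z(C)}(x - \beta^i)$, so that a codeword $c(x)$ lies in $C$ precisely when $c(\beta^i) = 0$ for all $i \in Z(C)$. The dual $C^\perp$ is itself cyclic (since the cyclic shift is an isometry, the dual of a shift-invariant code is shift-invariant), hence it too is determined by a defining set, and the task is to identify that set with $\mathbb{Z}_n \setminus \{-i \mid i \in Z(C)\}$.

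First I would set up the standard machinery relating the Euclidean inner product to evaluation at roots of unity. The key identity is that for polynomials $a(x), b(x) \in \mathbb{F}_q[x]/(x^n-1)$ with coefficient vectors $\ba, \bb$, the cyclic correlation $\sum_k a_k b_{(k+j) \bmod n}$ vanishes for all $j$ if and only if $\ba$ is orthogonal to every cyclic shift of $\bb$. Translating this into the frequency domain via the primitive $n$-th root of unity $\beta$, orthogonality of $C$ and $C^\perp$ amounts to the condition that the product $a(x)\,\tilde b(x)$ vanishes modulo $x^n - 1$, where $\tilde b(x) = x^{n}b(x^{-1}) \bmod (x^n-1)$ is the reciprocal polynomial. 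The crucial observation is that $\tilde b(\beta^i) = 0$ exactly when $b(\beta^{-i}) = 0$, which is what introduces the sign reversal $i \mapsto -i$ into the defining set.

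Concretely, I would argue that $b(x) \in C^\perp$ iff $\tilde b(x)$ lies in the even larger code whose roots are complementary to those of $C$. Writing the parity-check condition out, $\bx \in C^\perp$ forces $x(x)$ to be annihilated by evaluation at precisely those $\beta^j$ for which $-j \notin Z(C)$, i.e. $Z(C^\perp) = \{ j \in \mathbb{Z}_n : -j \notin Z(C)\} = \mathbb{Z}_n \setminus \{-i \mid i \in Z(C)\}$. A clean way to confirm this is a dimension count: $|Z(C^\perp)| = n - |Z(C)| = \dim C$, which matches $\dim C^\perp = n - \dim C = |Z(C)|$ as it should, and then check that the claimed root set is genuinely the defining set by verifying the orthogonality of the corresponding evaluation conditions. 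The main obstacle is keeping the bookkeeping of signs and reciprocals correct: one must be careful that $\{-i \bmod n : i \in Z(C)\}$ is taken inside $\mathbb{Z}_n$ and that the complement is the defining set of $C^\perp$ rather than its negation. Since the result is quoted from \cite[Proposition 4.3.8]{Ruud:Book}, the expected route is to cite that reference for the core identity and then merely record the consequence.

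For the second assertion about BCH codes, the generator polynomial of any cyclic code is $\prod_{j \in Z}(x - \beta^j)$ over its defining set $Z$, but for a code over $\mathbb{F}_q$ with roots in the extension $\mathbb{F}_{q^m}$, this product must be grouped into minimal polynomials: since $Z(C^\perp)$ is a union of $q$-ary cyclotomic cosets (being the complement of such a union, as cyclotomic cosets partition $\mathbb{Z}_n$), the generator polynomial of $C^\perp$ factors as the least common multiple of the minimal polynomials $m_j(x)$ of $\alpha^j$ over $\mathbb{F}_q$ for $j \in Z(C^\perp)$. This is immediate from the definition of the defining set together with the fact that each $m_j(x)$ contributes exactly the roots in the cyclotomic coset $\mathbb{C}_j$.
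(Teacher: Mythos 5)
Your argument is correct, but there is nothing in the paper to compare it against: the paper states this proposition as a quoted result from \cite[Proposition 4.3.8]{Ruud:Book} and gives no proof, using it only as a black box (e.g.\ in the proof of Proposition~\ref{Cyclic_Hermitian_Dual}). What you have reconstructed is essentially the standard textbook proof that the citation points to: $C^\perp$ is generated by the monic reciprocal of the check polynomial $h(x)=(x^n-1)/g(x)$, which in your frequency-domain phrasing becomes: $\mathbf{b}\in C^\perp$ iff $a(x)\tilde b(x)\equiv 0 \pmod{x^n-1}$ for all $a\in C$, and since $x^n-1$ has the $n$ distinct roots $\beta^i$, this pins down $\tilde b(\beta^i)=0$ exactly for $i\notin Z(C)$, giving $Z(C^\perp)=\{j\in\mathbb{Z}_n : -j\notin Z(C)\}=\mathbb{Z}_n\setminus\{-i \mid i\in Z(C)\}$.

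Three points worth tightening. First, the phrase ``even larger code'' is off: the cyclic code with defining set $\mathbb{Z}_n\setminus Z(C)$ has dimension $|Z(C)|=n-\dim C$, i.e.\ exactly $\dim C^\perp$, not more; in fact your dimension count is a sanity check rather than a needed step, because the ``iff'' chain already yields both containments --- for each $i\notin Z(C)$ there exists $a\in C$ with $a(\beta^i)\neq 0$, so the condition $\tilde b(\beta^i)=0$ for all $i\notin Z(C)$ is both necessary and sufficient. Second, the whole argument silently requires $\gcd(n,q)=1$ so that $\beta$ exists and $x^n-1$ is separable; the paper assumes this implicitly via $m=\mathrm{ord}_n(q)$, but a blind proof should say so, since separability is what lets you pass from $a(x)\tilde b(x)\equiv 0 \pmod{x^n-1}$ to the pointwise evaluation conditions. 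Third, in the BCH part your parenthetical justification is slightly misplaced: $Z(C^\perp)$ is the complement of $-Z(C)$, not of $Z(C)$, so to conclude it is a union of cyclotomic cosets you need the one-line fact that negation permutes cosets, $-\mathbb{C}_i=\mathbb{C}_{-i}$ (immediate from $(-i)q^k=-(iq^k)$). With those touch-ups, your write-up is a complete and faithful proof of the quoted result.
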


\begin{proposition}
            Let $C$ be a cyclic code over $\mathbb{F}_{q^2}$ with defining set $Z(C)$.
            Then

            \begin{equation*}
                Z(C^{\perp_h}) = \mathbb{Z}_n\setminus\{-i| i\in qZ(C)\}.
            \end{equation*}
            \label{Cyclic_Hermitian_Dual}
\end{proposition}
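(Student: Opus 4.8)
The plan is to reduce the Hermitian case to the already-established Euclidean case (Proposition~\ref{Cyclic_Euclidean_Dual}) by passing through the \emph{conjugate code}. First I would introduce $C^{(q)} = \{\bc^q : \bc \in C\}$, where $\bc^q = (c_0^q, \ldots, c_{n-1}^q)$. Since the Frobenius map $x \mapsto x^q$ is additive in characteristic $p$ and permutes $\mathbb{F}_{q^2}$, the set $C^{(q)}$ is again a linear code; and because raising coordinates to the $q$-th power commutes with the cyclic shift, $C^{(q)}$ is cyclic as well, so it has a well-defined defining set. The observation that drives the whole argument is the identity $C^{\perp_h} = (C^{(q)})^\perp$: indeed, the condition $\bx \cdot \bc^q = 0$ for all $\bc \in C$ is literally the statement that $\bx$ is Euclidean-orthogonal to every element of $C^{(q)}$.

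Next I would compute $Z(C^{(q)})$ in terms of $Z(C)$. Writing $c(x) = \sum_j c_j x^j$ for a codeword and $c^{(q)}(x) = \sum_j c_j^q x^j$ for its image, the Frobenius identity gives
$$c(\beta^i)^q = \Big(\sum_j c_j \beta^{ij}\Big)^q = \sum_j c_j^q \beta^{ijq} = c^{(q)}(\beta^{iq}).$$
Since a field element vanishes exactly when its $q$-th power does, this shows that $c(\beta^i) = 0$ for all $c \in C$ precisely when $c^{(q)}(\beta^{iq}) = 0$ for all $c \in C$; that is, $i \in Z(C)$ if and only if $iq \in Z(C^{(q)})$. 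Because $\gcd(q,n) = 1$ (the existence of a primitive $n$-th root of unity $\beta$ forces $n$ to be coprime to the characteristic $p$, hence to $q$), multiplication by $q$ permutes $\mathbb{Z}_n$, and therefore $Z(C^{(q)}) = qZ(C)$.

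Finally I would substitute. Applying Proposition~\ref{Cyclic_Euclidean_Dual} to the cyclic code $C^{(q)}$ yields $Z((C^{(q)})^\perp) = \mathbb{Z}_n \setminus \{-i \mid i \in Z(C^{(q)})\}$, and feeding in $Z(C^{(q)}) = qZ(C)$ together with $C^{\perp_h} = (C^{(q)})^\perp$ gives exactly $Z(C^{\perp_h}) = \mathbb{Z}_n \setminus \{-i \mid i \in qZ(C)\}$, as claimed. The only step requiring genuine care — the main obstacle — is the conjugate-code computation: one must track that the Frobenius twist sends the evaluation of $c$ at $\beta^i$ to the evaluation of $c^{(q)}$ at $\beta^{iq}$ (rather than at $\beta^i$), since this is precisely what produces the extra factor of $q$ distinguishing the Hermitian defining set from the Euclidean one.
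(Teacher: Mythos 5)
Your proposal is correct and follows essentially the same route as the paper's own proof: both pass through the conjugate code $C^{(q)}$, use the Frobenius identity to show $Z(C^{(q)}) = qZ(C)$, and then combine $C^{\perp_h} = (C^{(q)})^\perp$ with Proposition~\ref{Cyclic_Euclidean_Dual}. If anything, your write-up is slightly more careful than the paper's (you justify that $C^{(q)}$ is cyclic, that $\gcd(q,n)=1$, and you track the Frobenius twist in the direction that yields $qZ(C)$ directly, whereas the paper's displayed equivalence literally gives $q^{-1}Z(C)$ and silently uses invariance of $Z(C)$ under multiplication by $q^{2}$).
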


\begin{proof}
            Let $\bc\in\mathbb{F}_{q^2}^n$ be a codeword of $C$. Expressing $\bc^q$
            as a polynomial we have that $c^{(q)}(x) = c_0^q + c_1^{q}x + \cdots + c_{n-1}^q x^{n-1}$. So,
            $i\in\mathbb{Z}_n$ belongs to $Z(C^q)$ if and only if

            \begin{eqnarray*}
                c^{(q)}(\alpha^i) = 0 &\iff& c_0^q + c_1^{q}\alpha^i + \cdots + c_{n-1}^q \alpha^{i(n-1)} = 0\\
                                    &\iff& (c_0^q + c_1^{q}\alpha^i + \cdots + c_{n-1}^q \alpha^{i(n-1)})^q = 0\\
                                    &\iff& c_0 + c_1\alpha^{iq} + \cdots + c_{n-1}\alpha^{iq(n-1)} = 0\\
                                    &\iff& iq\in Z(C).
            \end{eqnarray*}This shows that $Z(C^q) = qZ(C)$. Since $C^{\perp_h} = (C^q)^\perp$, we have from
            Proposition~\ref{Cyclic_Euclidean_Dual} that $Z(C^{\perp_h}) = \mathbb{Z}_n\setminus\{-i| i\in qZ(C)\}$.
\end{proof}

The other class of cyclic codes used in this paper, Reed-Solomon codes, can be viewed as a subclass of BCH codes. Thus,
a similar characterization in terms of defining set can be given, see Definition~\ref{Definition:RS} and
Corollary~\ref{RS_Dual}. One property of such codes that make them important is that they are maximal distance separable
(MDS) codes; i.e., fixing the length and the dimension, they have the maximal minimal distance possible. As shown in
Section~\ref{sec:NewConstructions}, using such codes to construct QUENTA codes will result in MDS quantum codes.

\begin{definition}
            Let $b\geq 0$, $n = q-1$, and $1\leq k\leq n$. A cyclic code $RS_k(n,b)$
            of length $n$ over $\mathbb{F}_q$ is a \emph{Reed-Solomon code} with minimal
            distance $n-k+1$ if

            \begin{equation*}
                        g(x) = (x-\alpha^b)(x-\alpha^{b+1})\cdot\cdots\cdot(x-\alpha^{b+n-k-1}),
            \end{equation*}where $\alpha$ is a primitive element of $\mathbb{F}_q$.
            \label{Definition:RS}
\end{definition}

A particular application of Proposition~\ref{Cyclic_Euclidean_Dual} to Reed-Solomon codes is described
in Corollary~\ref{RS_Dual}, where the parameters and defining set of an Euclidean dual of a Reed-Solomon
is derived.

\begin{corollary}
            Let $RS_k(n,b)$ be a Reed-Solomon code. Then its Euclidean dual can be described as

            \begin{equation*}
                        RS_k(n,b)^\perp = RS_{n-k}(n,n-b+1)
            \end{equation*}In particular, the defining the of $RS_k(n,b)^\perp$ is given by
            $Z(RS_k(n,b)^\perp) = \{n-b+1, n-b+2, \ldots, n-b+k\}$.
            \label{RS_Dual}
\end{corollary}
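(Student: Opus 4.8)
The plan is to compute the defining set of the Euclidean dual directly from Proposition~\ref{Cyclic_Euclidean_Dual}, and then to match the result against the defining set prescribed by Definition~\ref{Definition:RS}. Since two cyclic codes of the same length over the same field having the same defining set must coincide, this establishes both assertions of the corollary simultaneously: the identification $RS_k(n,b)^\perp = RS_{n-k}(n,n-b+1)$ and the explicit description of its defining set.

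First I would read off the defining set of $RS_k(n,b)$ from its generator polynomial. Because $n=q-1$, we have $m=\mathrm{ord}_n(q)=1$, so $\mathbb{F}_{q^m}=\mathbb{F}_q$ and the primitive $n$-th root of unity $\beta$ can be taken to be the primitive element $\alpha$ itself. As $g(x)=(x-\alpha^b)\cdots(x-\alpha^{b+n-k-1})$, the roots are precisely $\alpha^b,\ldots,\alpha^{b+n-k-1}$, whence
\[
Z(RS_k(n,b))=\{b,b+1,\ldots,b+n-k-1\}\subseteq\mathbb{Z}_n ,
\]
a block of $n-k$ consecutive residues; this is consistent with the dimension $n-|Z(C)|=k$.

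Next I would apply Proposition~\ref{Cyclic_Euclidean_Dual}. Negating this block modulo $n$ gives $\{-b,-(b+1),\ldots,-(b+n-k-1)\}$, which reduces to the block of $n-k$ consecutive residues $\{k-b+1,k-b+2,\ldots,n-b\}\pmod n$. Taking the complement in $\mathbb{Z}_n$, the remaining $k$ residues again form a single cyclic interval, namely $\{n-b+1,n-b+2,\ldots,n-b+k\}\pmod n$; indeed $n-b+k\equiv k-b\pmod n$, so this interval closes up exactly against the negated block. Hence $Z(RS_k(n,b)^\perp)=\{n-b+1,\ldots,n-b+k\}$, which is the stated defining set. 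Finally, comparing with Definition~\ref{Definition:RS}, the code $RS_{n-k}(n,n-b+1)$ has defining set $\{n-b+1,n-b+2,\ldots,(n-b+1)+n-(n-k)-1\}=\{n-b+1,\ldots,n-b+k\}$, which coincides with $Z(RS_k(n,b)^\perp)$; therefore $RS_k(n,b)^\perp=RS_{n-k}(n,n-b+1)$.

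The argument is entirely a bookkeeping computation, so the only delicate point is the handling of the two reductions modulo $n$. The step I expect to require the most care is verifying that negation sends a consecutive block to a consecutive block and that its complement is once more a \emph{single} interval of length $k$, rather than a union of intervals; it is exactly this preservation of the ``cyclic interval'' structure that guarantees the dual is again a Reed-Solomon code (and not merely some cyclic code), and that pins down the precise shift $b\mapsto n-b+1$ and dimension $k\mapsto n-k$.
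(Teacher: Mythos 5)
Your proof is correct and follows exactly the route the paper intends: the paper states this corollary without an explicit proof, presenting it as ``a particular application of Proposition~\ref{Cyclic_Euclidean_Dual}'', and your computation (reading off $Z(RS_k(n,b))=\{b,\ldots,b+n-k-1\}$ from the generator polynomial, negating and complementing modulo $n$, then matching against Definition~\ref{Definition:RS}) is precisely that application with the details filled in. Your closing remark about why the complement of a cyclic interval is again a single cyclic interval is exactly the bookkeeping point the paper leaves implicit.
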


As it will be in the next subsection, the amount of entanglement in a QUENTA code is computed from
the dimension of the intersection between two codes. So, the last proposition of this subsection addresses
such subject.

\begin{proposition}\cite[Exercise 239, Chapter 4]{Huffman:2003}
	    Let $C_1$ and $C_2$ be cyclic codes with defining set $Z(C_1)$ and $Z(C_2)$, respectively.
	    Then the defining set of $C_1\cap C_2$ is given by $Z(C_1)\cup Z(C_2)$. In particular,
	    $\dim(C_1 \cap C_2) = n - |Z(C_1)\cup Z(C_2)|$.
	    \label{Intersection_Cyclic}
\end{proposition}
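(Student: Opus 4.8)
The plan is to reduce the claim to the characterization of cyclic-code membership by its defining set. Recall that a polynomial $c(x)\in\mathbb{F}_q[x]/(x^n-1)$ lies in a cyclic code $C$ if and only if $c(\beta^i)=0$ for every $i\in Z(C)$, where $\beta$ is the fixed primitive $n$-th root of unity. This holds because the generator polynomial of $C$ is $g(x)=\prod_{i\in Z(C)}(x-\beta^i)$, so $c(x)\in C$ is equivalent to $g(x)\mid c(x)$, which in turn is equivalent to each root $\beta^i$ of $g$ being a root of $c$.

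First I would express the intersection in terms of these vanishing conditions. By definition, $c(x)\in C_1\cap C_2$ precisely when $c(x)\in C_1$ and $c(x)\in C_2$. Applying the membership criterion to each code, this is equivalent to requiring $c(\beta^i)=0$ for all $i\in Z(C_1)$ together with $c(\beta^j)=0$ for all $j\in Z(C_2)$. Merging the two families of conditions, this holds if and only if $c(\beta^i)=0$ for every $i\in Z(C_1)\cup Z(C_2)$. Hence $C_1\cap C_2$ is exactly the cyclic code whose defining set is $Z(C_1)\cup Z(C_2)$, establishing the first assertion.

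For the dimension statement I would simply invoke the formula $\dim C=n-|Z(C)|$ recalled earlier for cyclic codes, now applied to $C=C_1\cap C_2$ with defining set $Z(C_1)\cup Z(C_2)$. This yields $\dim(C_1\cap C_2)=n-|Z(C_1)\cup Z(C_2)|$ at once.

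The one point requiring care is confirming that $Z(C_1)\cup Z(C_2)$ is a legitimate defining set, i.e., that its associated generator polynomial actually has coefficients in $\mathbb{F}_q$, so that the dimension formula applies. This is automatic: each $Z(C_i)$, being the defining set of a cyclic code, is a union of $q$-ary cyclotomic cosets $\bmod\, n$, and a union of such sets is again closed under multiplication by $q$. Consequently $\prod_{i\in Z(C_1)\cup Z(C_2)}(x-\beta^i)$ is fixed by the Frobenius action and lies in $\mathbb{F}_q[x]$, so it is a genuine generator polynomial and the degree count $n-|Z(C_1)\cup Z(C_2)|$ is valid. Since the rest of the argument is a direct translation of the set-theoretic intersection into the union of zero loci, I do not expect any serious obstacle beyond this closure observation.
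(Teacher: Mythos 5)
Your proof is correct; in fact the paper gives no proof of this proposition at all, citing it to an exercise in Huffman--Pless, and your argument is precisely the standard one that exercise intends (membership in $C_1\cap C_2$ equals simultaneous vanishing at $\beta^i$ for $i\in Z(C_1)\cup Z(C_2)$, plus the dimension count $\dim C=n-|Z(C)|$). Your closure check that $Z(C_1)\cup Z(C_2)$ is a union of $q$-ary cyclotomic cosets is the right point of care; the only implicit hypothesis worth noting is $\gcd(n,q)=1$, which the paper's setup already guarantees by positing a primitive $n$-th root of unity $\beta$, so that $x^n-1$ has distinct roots and $g(x)=\prod_{i\in Z(C)}(x-\beta^i)$ is legitimate.
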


\subsection{Entanglement-assisted quantum codes}

\begin{definition}
  A quantum code $\mathcal{Q}$ is called an $[[n,k,d;c]]_q$ entanglement-assisted quantum (QUENTA) code
  if it encodes $k$ logical qudits into $n$ physical qudits using $c$ copies of maximally entangled states
  and can correct $\lfloor(d-1)/2\rfloor$ quantum errors. A QUENTA code is said to have maximal
  entanglement when $c = n-k$.
\end{definition}

Formulating a stabilizer paradigm for QUENTA codes gives a way to use classical codes to construct this
quantum codes \cite{Brun:2014}. In particular, we have the next two procedures by
Galindo, \emph{et al.} \cite{Galindo:2019}.

\begin{proposition}\cite[Theorem 4]{Galindo:2019}
Let $C_1$ and $C_2$ be two linear codes over $\mathbb{F}_q$ with parameters $[n,k_1,d_1]_q$ and
$[n,k_2,d_2]_q$ and parity check matrices $H_1$ and $H_2$, respectively. Then there is a QUENTA code with parameters
$[[n,k_1+k_2-n+c, d; c]]_q$, where $d = \min\{d_H(C_1\setminus(C_1\cap C_2^\perp)), d_H(C_2\setminus(C_1^\perp\cap C_2))\}$,
with $d_H$ as the minimum Hamming weight of the vectors in the set, and

\begin{equation}
  c = {\rk} (H_1 H_2^T) = \dim C_1^\perp - \dim (C_1^\perp\cap C_2)
\end{equation}is the number of required maximally entangled states.
\label{Prep:WildeEuclid}
\end{proposition}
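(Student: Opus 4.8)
The plan is to realize the pair $(C_1,C_2)$ as a single symplectic (CSS-type) check matrix and then read off all three parameters from the entanglement-assisted stabilizer formalism. Assume without loss of generality that $H_1$ and $H_2$ have full row rank $n-k_1$ and $n-k_2$ (otherwise discard dependent rows). Over $\mathbb{F}_q$, represent the candidate stabilizer generators in symplectic form by the $(2n-k_1-k_2)\times 2n$ matrix
\[
M=\begin{pmatrix} H_1 & 0\\ 0 & H_2\end{pmatrix},
\]
so that the rows of the first block are $X$-type and those of the second block are $Z$-type. The commutation relations are encoded by the alternating symplectic Gram matrix $\Omega = M J M^{T}$, with $J=\bigl(\begin{smallmatrix}0 & I\\ -I & 0\end{smallmatrix}\bigr)$; the formalism gives the number of ebits as $c=\tfrac12\,\rk(\Omega)$ (the number of hyperbolic pairs in a symplectic Gram--Schmidt decomposition) and the dimension as $k=n-\rk(M)+c$.

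First I would compute $\Omega$ directly. A block multiplication gives
\[
\Omega = \begin{pmatrix} 0 & H_1 H_2^{T}\\ -H_2 H_1^{T} & 0\end{pmatrix},
\]
whose rank is $\rk(H_1H_2^{T})+\rk(H_2H_1^{T})=2\,\rk(H_1H_2^{T})$, so $c=\rk(H_1H_2^{T})$. For the second expression, note $\rk(H_1H_2^{T})=\rk(H_2H_1^{T})$ and apply rank--nullity to the map $\bx\mapsto H_2(H_1^{T}\bx)$: the columns of $H_1^{T}$ span $C_1^\perp$ and $H_2$ annihilates exactly $C_2=\ker H_2$, so
\[
\rk(H_2H_1^{T})=\dim C_1^\perp-\dim\!\bigl(C_1^\perp\cap C_2\bigr),
\]
which is the claimed value of $c$. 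The dimension is then immediate: $\rk(M)=(n-k_1)+(n-k_2)$, whence $k=n-\rk(M)+c=k_1+k_2-n+c$.

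It remains to identify the minimum distance. Because $M$ is block-diagonal, $X$- and $Z$-type errors decouple: for an error with symplectic representation $(\be_X\mid \be_Z)$, the symplectic product with an $X$-row $(\mathbf{h}\mid 0)$ is $-\mathbf{h}\cdot\be_Z$ and with a $Z$-row $(0\mid \bg)$ is $\bg\cdot\be_X$. Thus $\be_Z$ goes undetected exactly when $\be_Z\in\ker H_1=C_1$, while it acts trivially precisely when it lies in the $Z$-stabilizer row space $C_2^\perp$; the nontrivial undetectable $Z$-type errors are therefore represented by $C_1\setminus(C_1\cap C_2^\perp)$, and symmetrically the $X$-type ones by $C_2\setminus(C_1^\perp\cap C_2)$. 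Taking the minimum Hamming weight over both families yields $d=\min\{d_H(C_1\setminus(C_1\cap C_2^\perp)),\,d_H(C_2\setminus(C_1^\perp\cap C_2))\}$.

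The main obstacle is the distance step, and more precisely the justification that these two cosets are genuinely the nontrivial logical operators of the \emph{entanglement-assisted} code. In the ordinary CSS setting this requires $C_1^\perp\subseteq C_2$ so that the $X$- and $Z$-stabilizers commute; here that hypothesis is dropped, and one must invoke the structural content of the formalism, namely that the $c$ ebits neutralize exactly the $2c$ anticommuting generators isolated by the symplectic Gram--Schmidt decomposition. Verifying that, after this neutralization, the effective logical group restricted to each type is exactly $C_1/(C_1\cap C_2^\perp)$ and $C_2/(C_1^\perp\cap C_2)$ --- equivalently that entanglement does not enlarge the set of trivial operators beyond $C_2^\perp$ and $C_1^\perp$ --- is the step that genuinely uses the entanglement-assisted machinery rather than elementary linear algebra.
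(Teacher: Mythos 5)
The first thing to note is that the paper contains no proof of this proposition at all: it is imported verbatim, with citation, from Galindo \emph{et al.} (Theorem 4 of that reference), so the only meaningful comparison is with that source --- and your derivation follows essentially the same route: realize $(C_1,C_2)$ as the CSS-type symplectic check matrix $M$, obtain $c=\tfrac12\rk(MJM^T)=\rk(H_1H_2^T)$ from the block anti-diagonal Gram matrix, obtain the alternative expression $\dim C_1^\perp-\dim(C_1^\perp\cap C_2)$ by rank--nullity (your argument here is exactly right: the columns of $H_1^T$ span $C_1^\perp$ and $\ker H_2 = C_2$), and obtain $k=n-\rk(M)+c=k_1+k_2-n+c$. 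These parts of your proposal are correct, granting --- as is fair --- the two structural facts of the entanglement-assisted formalism that you quote, namely the ebit count $c=\tfrac12\rk(MJM^T)$ and the qudit count $k=n-\rk(M)+c$.

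The distance step, which you flag yourself as the weak point, is indeed the one place needing repair, but the repair is a citation plus one observation, not new mathematics. The error-correction condition of the formalism (Brun--Devetak--Hsieh; Galindo \emph{et al.}) says that an undetectable error on Alice's $n$ qudits is harmful if and only if it lies outside the \emph{isotropic subgroup} $\mathcal{I}=\mathrm{rowspace}(M)\cap\mathrm{rowspace}(M)^{\perp}$ (the orthogonal taken with respect to the symplectic form), \emph{not} outside the full row space as you asserted: your sentence that an error ``acts trivially precisely when it lies in the $Z$-stabilizer row space $C_2^\perp$'' is false in the EA setting, because a row-space element with a nonzero hyperbolic component extends to a stabilizer only via a nontrivial operator on Bob's qudits, so on Alice's side alone it is \emph{detectable}, not trivial --- entanglement shrinks the trivial set to the radical rather than enlarging it, the opposite of the worry in your last paragraph. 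Fortunately this imprecision does not change the answer: harmful errors are by definition also undetectable, and since the undetectable set is $\mathrm{rowspace}(M)^{\perp}$, one has $(\text{undetectable})\cap(\text{row space})=\mathcal{I}$, hence $(\text{undetectable})\setminus\mathcal{I}=(\text{undetectable})\setminus(\text{row space})$. For $Z$-type errors both descriptions give $C_1\setminus(C_1\cap C_2^\perp)$, symmetrically $C_2\setminus(C_1^\perp\cap C_2)$ for $X$-type, and any mixed harmful undetectable error has symplectic weight at least that of one of its pure-type truncations, which is again harmful and undetectable. With that citation and this correction, your proof is complete and is, in substance, the proof of the cited reference.
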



\begin{proposition}\cite[Proposition 3 and Corollary 1]{Galindo:2019}
Let $C$ be a linear codes over $\mathbb{F}_{q^2}$ with parameters $[n,k,d]_q$,
$H$ be a parity check matrix for $C$, and $H^*$ be the $q$-th power of the transpose matrix of $H$.
Then there is a QUENTA code with parameters
$[[n,2k-n+c, d'; c]]_q$, where $d' = d_H(C\setminus(C\cap C^{\perp_h}))$,
with $d_H$ as the minimum Hamming weight of the vectors in the set, and

\begin{equation}
  c = {\rk} (H H^*) = \dim C^{\perp_h} - \dim (C^{\perp_h}\cap C)
\end{equation}is the number of required maximally entangled states.
\label{Prep:WildeHerm}
\end{proposition}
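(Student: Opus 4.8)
The plan is to deduce this Hermitian statement from the Euclidean construction of Proposition~\ref{Prep:WildeEuclid} by passing to the Frobenius-twisted code, rather than rebuilding the stabilizer formalism from scratch. Write $C^q = \{\bc^q : \bc\in C\}$ for the code obtained by raising every coordinate of every codeword to the $q$-th power. Since $a\mapsto a^q$ is an automorphism of $\mathbb{F}_{q^2}$ and the Frobenius is additive in characteristic $p$, $C^q$ is again a linear $[n,k]$ code, and the entrywise $q$-th power $H^q$ of $H$ is a parity check matrix for it: for $\bc\in C$ one computes $H^q(\bc^q)^T = (H\bc^T)^q = \bz$, and $\rk H^q = \rk H$ forces equality $C^q = \ker H^q$. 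Moreover, straight from the definition of the Hermitian dual,
\begin{equation*}
C^{\perp_h} = \{\bx : \bx\cdot\bc^q = 0 \text{ for all } \bc\in C\} = (C^q)^\perp,
\end{equation*}
so the Euclidean dual of $C^q$ is exactly the Hermitian dual of $C$ (this is the same observation already used in the proof of Proposition~\ref{Cyclic_Hermitian_Dual}).

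Next I would apply Proposition~\ref{Prep:WildeEuclid} with $C_1 = C$ and $C_2 = C^q$, whose parity check matrices are $H_1 = H$ and $H_2 = H^q$. The number of entangled pairs reads off immediately, since $H^* = (H^q)^T$ by hypothesis:
\begin{equation*}
c = \rk(H_1 H_2^T) = \rk\bigl(H(H^q)^T\bigr) = \rk(H H^*),
\end{equation*}
and the quantum dimension becomes $k_1 + k_2 - n + c = 2k - n + c$, exactly as claimed.

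It remains to match the two remaining quantities, and this is the step where the bookkeeping must be checked carefully rather than any deep new input being needed. The map $\bx\mapsto\bx^q$ is a Hamming-weight-preserving bijection of $\mathbb{F}_{q^2}^n$ that carries subspaces to subspaces of equal dimension, satisfies $C^{q^2} = C$ (as $a^{q^2}=a$ on $\mathbb{F}_{q^2}$), and sends $C^\perp$ to $(C^\perp)^q = C^{\perp_h}$. Applying it to $C^\perp\cap C^q$ therefore yields $C^{\perp_h}\cap C$, so that $\dim(C^\perp\cap C^q) = \dim(C^{\perp_h}\cap C)$; combined with $\dim C^\perp = n-k = \dim C^{\perp_h}$, this gives the asserted identity $c = \dim C^{\perp_h} - \dim(C^{\perp_h}\cap C)$. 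The same bijection identifies each of the two Euclidean distance sets, $C_1\setminus(C_1\cap C_2^\perp)$ and $C_2\setminus(C_1^\perp\cap C_2)$, with the $q$-th power image of $C\setminus(C\cap C^{\perp_h})$, since $C_2^\perp = C^{\perp_h}$ and $(C^\perp\cap C^q)^q = C^{\perp_h}\cap C$. Because the twist preserves weights, both minimum weights collapse to $d' = d_H(C\setminus(C\cap C^{\perp_h}))$, and hence $d = d'$. The main (and only) obstacle is thus verifying that intersections, dimensions, and Hamming weights all transport correctly through the Frobenius twist; once that is in place, the proposition follows as a corollary of Proposition~\ref{Prep:WildeEuclid}.
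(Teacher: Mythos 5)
Your classical bookkeeping is all correct: $C^q$ is $\mathbb{F}_{q^2}$-linear with parity check matrix $H^q$, one has $(C^q)^\perp = (C^\perp)^q = C^{\perp_h}$, the Frobenius twist preserves weights and dimensions, and with $C_1 = C$, $C_2 = C^q$ the quantities $\rk(H_1H_2^T)$, $k_1+k_2-n+c$, and the two distance sets do collapse to exactly the expressions appearing in the Hermitian statement. Nevertheless the reduction fails at its final step, and the failure is not repairable by more careful bookkeeping. Proposition~\ref{Prep:WildeEuclid} is stated for classical codes over a field $\mathbb{F}_q$ and produces a QUENTA code over that \emph{same} alphabet; applied to your pair $(C, C^q)$ of codes over $\mathbb{F}_{q^2}$, it yields an $[[n,\,2k-n+c,\,d';\,c]]_{q^2}$ code, whose physical and logical systems are $q^2$-dimensional qudits and whose $c$ ebits are maximally entangled pairs of $q^2$-dimensional systems. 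The proposition you are asked to prove asserts the existence of an $[[n,\,2k-n+c,\,d';\,c]]_q$ code: the same numerical parameters, but on $q$-dimensional qudits. These are genuinely different objects, and there is no generic passage from one to the other (splitting each $q^2$-ary qudit into two $q$-ary ones gives a $[[2n, 2(2k-n+c), \geq d'; 2c]]_q$ code, with doubled length and degraded relative distance, not the claimed parameters).

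This alphabet descent is precisely why the Hermitian construction is a separate theorem rather than a corollary of the Euclidean one, and why the paper imports it by citation from Galindo \emph{et al.} instead of deriving it. The actual proof must identify $\mathbb{F}_{q^2}^n$ with $\mathbb{F}_q^{2n}$ and show that the Hermitian form on $\mathbb{F}_{q^2}^n$ corresponds, via the trace map, to the symplectic form over $\mathbb{F}_q$ that governs commutation of generalized Pauli operators on $n$ qudits of dimension $q$; only then does the $\mathbb{F}_{q^2}$-linear code $C$ define a stabilizer (with entanglement) on $q$-ary systems. Your Frobenius-twist observation is still valuable: it is exactly the device the paper uses to compute $Z(C^{\perp_h})$ in Proposition~\ref{Cyclic_Hermitian_Dual}, and it correctly explains why the numerical parameters of the Hermitian construction look like the Euclidean ones specialized to $C_2 = C^q$. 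But it cannot supply the change of alphabet, which is the real content of the cited result.
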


A measurement of goodness for a QUENTA code is the quantum Singleton bound (QSB).
Let $[[n,k,d;c]]_q$ be a QUENTA code, then the QSB is given by

\begin{equation}\label{QSB}
            d \leq  \Big{\lfloor}\frac{n-k+c}{2}\Big{\rfloor} + 1.
\end{equation}The difference between the QSB and $d$ is called \emph{quantum Singleton defect}. When the quantum Singleton
defect is equal to zero (resp. one) the code is called maximum distance separable quantum code (resp. almost maximum
distance separable quantum code) and it is denoted MDS quantum code (resp. almost MDS quantum code).

%
%

\section{New Entanglement-Assisted Quantum Error Correcting Cyclic Codes}
\label{sec:NewConstructions}
In this section is shown the construction of QUENTA codes from the cyclic codes.
We are going to make use of Euclidean and Hermitian constructions, which will give codes with different
parameters when compared over the same field.

\subsection{Euclidean Construction}

A straightforward application of cyclic codes to the Proposition~\ref{Prep:WildeEuclid} via
defining set description can produce some interesting
results. See Theorem~\ref{newConstruction:theorem1} and Corollary~\ref{corollary2}.

\begin{theorem}
    Let $C_1$ and $C_2$ be two cyclic codes with parameters $[n,k_1,d_1]_q$ and $[n,k_2,d_2]_q$,
    respectively. Then there is an QUENTA code with parameters
    $[[n,k_1-|Z(C_1^\perp)\cap Z(C_2)|, \min\{d_1,d_2\};n-k_2-|Z(C_1^\perp)\cap Z(C_2)|]]_q$.
    \label{newConstruction:theorem1}
\end{theorem}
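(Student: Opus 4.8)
The plan is to directly specialize Proposition~\ref{Prep:WildeEuclid} to the cyclic setting and then translate every quantity appearing in its conclusion into the language of defining sets, using Propositions~\ref{Cyclic_Euclidean_Dual} and~\ref{Intersection_Cyclic}. Since $C_1$ and $C_2$ are cyclic with parameters $[n,k_1,d_1]_q$ and $[n,k_2,d_2]_q$, they certainly are linear codes possessing parity check matrices $H_1$ and $H_2$, so Proposition~\ref{Prep:WildeEuclid} applies verbatim and yields a QUENTA code with parameters $[[n,k_1+k_2-n+c,d;c]]_q$. It therefore suffices to compute $c$ and to verify that the claimed dimension and entanglement consumption match those in the statement.

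First I would compute the entanglement parameter $c$. By Proposition~\ref{Prep:WildeEuclid}, $c = \dim C_1^\perp - \dim(C_1^\perp \cap C_2)$. Since $C_1$ is cyclic, so is its Euclidean dual $C_1^\perp$, and hence $C_1^\perp \cap C_2$ is an intersection of two cyclic codes. Applying Proposition~\ref{Intersection_Cyclic} gives $\dim(C_1^\perp \cap C_2) = n - |Z(C_1^\perp) \cup Z(C_2)|$, while $\dim C_1^\perp = n - k_1$ (as $C_1$ has dimension $k_1$). Substituting,
\begin{equation*}
c = (n-k_1) - \bigl(n - |Z(C_1^\perp)\cup Z(C_2)|\bigr) = |Z(C_1^\perp)\cup Z(C_2)| - k_1.
\end{equation*}
To match the stated form, I would expand the union via inclusion-exclusion, $|Z(C_1^\perp)\cup Z(C_2)| = |Z(C_1^\perp)| + |Z(C_2)| - |Z(C_1^\perp)\cap Z(C_2)|$. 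Using $|Z(C_1^\perp)| = n-k_1$ and $|Z(C_2)| = n-k_2$ (both from the dimension-equals-$n$-minus-size-of-defining-set fact), this yields $c = (n-k_1)+(n-k_2)-|Z(C_1^\perp)\cap Z(C_2)| - k_1$, and simplifying gives $c = n - k_2 - |Z(C_1^\perp)\cap Z(C_2)| + (n - 2k_1)$; the main care is to track these terms carefully so that the expression collapses to the claimed $c = n - k_2 - |Z(C_1^\perp)\cap Z(C_2)|$, which forces me to recheck whether the intended computation uses $\dim C_1^\perp - \dim(C_1^\perp\cap C_2)$ or an equivalent symmetric reformulation.

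Next I would verify the dimension $k_1+k_2-n+c$ of the QUENTA code. Plugging the computed $c = |Z(C_1^\perp)\cup Z(C_2)| - k_1$ into $k_1+k_2-n+c$ gives $k_2 - n + |Z(C_1^\perp)\cup Z(C_2)|$, and I would then show this equals $k_1 - |Z(C_1^\perp)\cap Z(C_2)|$ by again invoking inclusion-exclusion together with the size identities for $Z(C_1^\perp)$ and $Z(C_2)$; this is the routine but error-prone bookkeeping step. Finally, for the minimum distance I would note that the set-theoretic distance $d = \min\{d_H(C_1\setminus(C_1\cap C_2^\perp)), d_H(C_2\setminus(C_1^\perp\cap C_2))\}$ from Proposition~\ref{Prep:WildeEuclid} is bounded below by $\min\{d_1,d_2\}$, since removing a subcode can only delete codewords and hence cannot decrease the minimum weight; the statement records $\min\{d_1,d_2\}$ as the guaranteed distance. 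The main obstacle I anticipate is purely the arithmetic reconciliation of the $c$ and dimension expressions—ensuring the inclusion-exclusion and the $n-k$ substitutions combine into exactly the advertised parameters—rather than any conceptual difficulty, since all the structural tools (dual defining sets, intersection defining sets, and the base QUENTA construction) are already in hand.
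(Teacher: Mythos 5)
Your overall strategy is exactly the paper's: specialize Proposition~\ref{Prep:WildeEuclid}, compute $c = \dim C_1^\perp - \dim(C_1^\perp\cap C_2)$ via Proposition~\ref{Intersection_Cyclic} and inclusion-exclusion, then substitute back. However, your computation of $c$ contains a genuine error that you notice but leave unresolved: you substitute $|Z(C_1^\perp)| = n-k_1$, which is wrong. The rule ``dimension equals $n$ minus the size of the defining set,'' applied to the \emph{dual} code $C_1^\perp$ (which is itself cyclic, of dimension $n-k_1$), gives
\begin{equation*}
|Z(C_1^\perp)| = n - \dim C_1^\perp = n-(n-k_1) = k_1,
\end{equation*}
not $n-k_1$. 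The quantity $n-k_1$ is the size of $Z(C_1)$, the defining set of $C_1$ itself. This confusion is what produces your spurious leftover term $(n-2k_1)$ and prevents the expression from collapsing; your closing remark that you must ``recheck whether the intended computation uses $\dim C_1^\perp - \dim(C_1^\perp\cap C_2)$ or an equivalent symmetric reformulation'' signals the gap without closing it.

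With the correct substitution everything works out exactly as you set it up: $c = |Z(C_1^\perp)\cup Z(C_2)| - k_1 = |Z(C_1^\perp)| + |Z(C_2)| - |Z(C_1^\perp)\cap Z(C_2)| - k_1 = k_1 + (n-k_2) - |Z(C_1^\perp)\cap Z(C_2)| - k_1 = n - k_2 - |Z(C_1^\perp)\cap Z(C_2)|$, and then the dimension $k_1+k_2-n+c = k_1 - |Z(C_1^\perp)\cap Z(C_2)|$ as claimed. (Equivalently, the paper first computes $\dim(C_1^\perp\cap C_2) = k_2 - k_1 + |Z(C_1^\perp)\cap Z(C_2)|$ and subtracts it from $\dim C_1^\perp = n-k_1$; same arithmetic, same ingredients.) Your treatment of the minimum distance --- observing that $d_H(C_1\setminus(C_1\cap C_2^\perp)) \geq d_1$ because the set consists of nonzero codewords of $C_1$ --- is fine, and is in fact more explicit than the paper, which substitutes $\min\{d_1,d_2\}$ without comment.
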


\begin{proof}
            From Proposition~\ref{Intersection_Cyclic} we have that
            $\dim(C_1^\perp \cap C_2) = n - |Z(C_1^\perp)\cup Z(C_2)| =
            n - |Z(C_2)| - |Z(C_1^\perp)| + |Z(C_1^\perp)\cap Z(C_2)| = k_2 - k_1 + |Z(C_1^\perp)\cap Z(C_2)|$.
            So, the amount of entanglement used in an QUENTA code constructed from these two cyclic codes can
            be computed from Proposition~\ref{Prep:WildeEuclid}, which is
            $c = n-k_2 - |Z(C_1^\perp)\cap Z(C_2)|$. Substituting this value of $c$ in the parameters of the
            QUENTA code in Proposition~\ref{Prep:WildeEuclid}, we obtain an
            $[[n,k_1-|Z(C_1^\perp)\cap Z(C_2)|, \min\{d_1,d_2\};n-k_2-|Z(C_1^\perp)\cap Z(C_2)|]]_q$
            QUENTA code.
\end{proof}

\begin{corollary}
    Let $C$ be a LCD cyclic code with parameters $[n,k,d]_q$. Then there is a maximal entanglement QUENTA code
    with parameters $[[n,k,d;n-k]]_q$. In particular, if $C$ is MDS, so it is the QUENTA code derived from it.
    \label{corollary2}
\end{corollary}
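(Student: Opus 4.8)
The plan is to specialize Theorem~\ref{newConstruction:theorem1} to the single available code by taking $C_1 = C_2 = C$. With this choice the theorem produces directly a QUENTA code with parameters
\[
[[\,n,\ k - |Z(C^\perp)\cap Z(C)|,\ \min\{d,d\};\ n - k - |Z(C^\perp)\cap Z(C)|\,]]_q ,
\]
so the entire argument reduces to showing that the overlap $|Z(C^\perp)\cap Z(C)|$ vanishes. Once this is established, the dimension becomes $k$, the distance becomes $d$, and the entanglement parameter becomes $n-k$, which is exactly the maximal-entanglement condition $c = n-k$.

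The key step is to convert the LCD hypothesis into the statement $|Z(C^\perp)\cap Z(C)| = 0$. First I would record the defining-set cardinalities: from $\dim C = n - |Z(C)|$ we get $|Z(C)| = n-k$, and since $\dim C^\perp = n-k$ the same relation gives $|Z(C^\perp)| = k$. The hypothesis $\mathrm{hull}(C) = C\cap C^\perp = \{\mathbf{0}\}$ means $\dim(C\cap C^\perp) = 0$, so applying Proposition~\ref{Intersection_Cyclic} to $C$ and $C^\perp$ forces $|Z(C)\cup Z(C^\perp)| = n$. Inclusion--exclusion then yields
\[
n = |Z(C)\cup Z(C^\perp)| = |Z(C)| + |Z(C^\perp)| - |Z(C)\cap Z(C^\perp)| = (n-k) + k - |Z(C)\cap Z(C^\perp)| ,
\]
whence $|Z(C)\cap Z(C^\perp)| = 0$, completing the parameter computation and giving the $[[n,k,d;n-k]]_q$ code.

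For the MDS claim, suppose in addition that $C$ is MDS, so $d = n-k+1$. Substituting $c = n-k$ into the quantum Singleton bound~\eqref{QSB} gives the upper bound $\lfloor (n-k+c)/2 \rfloor + 1 = (n-k) + 1$, which the code attains; hence its quantum Singleton defect is zero and it is an MDS QUENTA code. I do not expect a serious obstacle here, since the corollary is a clean specialization of Theorem~\ref{newConstruction:theorem1}; the only delicate points are the bookkeeping that pairs each defining-set size with the correct code dimension, and the verification that the distance reported by the theorem is the true distance $d$ rather than a mere lower bound. The latter holds because, for an LCD code, the sets $C\setminus(C\cap C^\perp)$ entering Proposition~\ref{Prep:WildeEuclid} reduce to $C\setminus\{\mathbf{0}\}$, so the minimum weight is exactly $d$.
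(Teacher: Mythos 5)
Your proposal is correct and follows essentially the same route as the paper: specialize Theorem~\ref{newConstruction:theorem1} with $C_1 = C_2 = C$ and use the LCD hypothesis to make the defining-set intersection vanish. The only difference is that you fill in details the paper leaves implicit --- the inclusion--exclusion argument showing $\mathrm{hull}(C)=\{\mathbf{0}\}$ forces $|Z(C)\cap Z(C^\perp)|=0$, the verification of the MDS claim against the bound in Eq.~\eqref{QSB}, and the observation that the distance is exact because $C\setminus(C\cap C^\perp)=C\setminus\{\mathbf{0}\}$ --- all of which are sound.
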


\begin{proof}
            Let $C_1 = C_2 = C$ in Theorem~\ref{newConstruction:theorem1}.
            Since $C$ is LCD, then $|Z(C_1^\perp)\cap Z(C_2)|=0$.
            From Theorem~\ref{newConstruction:theorem1}
            we have that there is an QUENTA code with parameters $[[n,k, d; n-k]]_q$.
\end{proof}

\begin{theorem}
            Let $C_1 = RS_{k_1}(n,b_1)$ and $C_2 = RS_{k_2}(n,b_2)$ be two Reed-Solomon codes
            over $\mathbb{F}_q$ with $0\leq b_1\leq k_1$, $b_2\geq 0$, and $b_1+b_2\leq k_2+1$. Then we have
            two possible cases:
            \begin{enumerate}
              \item For $k_1-b_1\geq b_2$, there is an QUENTA code with parameters $$[[n,b_1+b_2 - 1, n-\min\{k_1,k_2\}+1; n+b_1+b_2-k_1-k_2-1]]_q;$$
              \item For $k_1-b_1< b_2$, there is an QUENTA code with parameters $$[[n,k_1, n-\min\{k_1,k_2\}+1; n-k_2]]_q.$$
            \end{enumerate}
            \label{Theorem:RS_Euclid}
\end{theorem}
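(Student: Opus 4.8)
The plan is to apply Theorem~\ref{newConstruction:theorem1} directly, so the entire argument reduces to computing the single quantity $|Z(C_1^\perp)\cap Z(C_2)|$ for the two Reed--Solomon codes in question, and then substituting it into the generic parameter formula. Everything else in the statement---the length $n$, the minimum distance $\min\{d_1,d_2\} = n-\min\{k_1,k_2\}+1$ (using that Reed--Solomon codes are MDS so $d_i = n-k_i+1$), and the dimension and entanglement in terms of that intersection number---follows mechanically once the intersection size is known.

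First I would write down the three defining sets explicitly as intervals of consecutive residues in $\mathbb{Z}_n$. By Definition~\ref{Definition:RS}, the code $C_2 = RS_{k_2}(n,b_2)$ has defining set $Z(C_2) = \{b_2, b_2+1, \ldots, b_2 + n - k_2 - 1\}$, a block of $n-k_2$ consecutive integers. By Corollary~\ref{RS_Dual}, the Euclidean dual $C_1^\perp = RS_{k_1}(n,b_1)^\perp$ has defining set $Z(C_1^\perp) = \{n-b_1+1, n-b_1+2, \ldots, n-b_1+k_1\}$, a block of $k_1$ consecutive integers. The heart of the computation is then to intersect these two arithmetic intervals modulo $n$ and count the overlap. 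This is where the two hypotheses on the $b_i$ and the case split on the sign of $k_1 - b_1 - b_2$ enter: they are precisely the inequalities that control whether the right-hand end of one interval falls short of, or past, the left-hand end of the other, and they also guarantee (via $0\le b_1\le k_1$, $b_2\ge 0$, $b_1+b_2\le k_2+1$) that the blocks stay within a single wrap-around so the count is clean.

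The main obstacle I expect is the careful bookkeeping of the interval intersection modulo $n$, including making sure no spurious wrap-around creates extra common elements. Concretely, I would compare the interval $\{n-b_1+1,\ldots,n-b_1+k_1\}$ against $\{b_2,\ldots,b_2+n-k_2-1\}$. In the first case $k_1 - b_1 \ge b_2$, I expect the overlap to have size $k_1 + k_2 - n + 1 - b_1 - b_2 + (\text{something})$ so that $k_1 - |Z(C_1^\perp)\cap Z(C_2)| = b_1 + b_2 - 1$, matching the claimed dimension; correspondingly the entanglement $n - k_2 - |Z(C_1^\perp)\cap Z(C_2)|$ collapses to $n + b_1 + b_2 - k_1 - k_2 - 1$. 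In the second case $k_1 - b_1 < b_2$, the two blocks must be forced disjoint, giving $|Z(C_1^\perp)\cap Z(C_2)| = 0$, whence the dimension is exactly $k_1$ and the entanglement is $n - k_2$, as stated.

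Once the intersection count is pinned down in each case, I would finish by feeding the numbers into Theorem~\ref{newConstruction:theorem1}: the parameters $[[n,\,k_1 - |Z(C_1^\perp)\cap Z(C_2)|,\,\min\{d_1,d_2\};\,n-k_2 - |Z(C_1^\perp)\cap Z(C_2)|]]_q$ specialize immediately to the two displayed parameter sets. The minimum distance is handled uniformly by noting $d_1 = n-k_1+1$ and $d_2 = n-k_2+1$, so $\min\{d_1,d_2\} = n - \max\{k_1,k_2\} + 1 = n - \min\{k_1,k_2\} + 1$ only after I double-check which of $k_1,k_2$ is larger---this sign convention is a small trap worth verifying explicitly against the hypotheses rather than assuming. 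The bulk of the work, and the only genuinely delicate part, is the modular interval arithmetic; the rest is substitution.
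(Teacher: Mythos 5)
Your proposal follows essentially the same route as the paper's proof: express $Z(C_1^\perp)$ via Corollary~\ref{RS_Dual}, intersect the two blocks of consecutive residues modulo $n$ using the hypotheses $0\leq b_1\leq k_1$ and $b_1+b_2\leq k_2+1$ to control the wrap-around (the intersection has size $k_1-(b_1+b_2)+1$ when $k_1-b_1\geq b_2$ and is empty otherwise), and substitute into Theorem~\ref{newConstruction:theorem1}. The ``small trap'' you flag in the distance is a genuine one: since $d_i = n-k_i+1$, the construction yields $\min\{d_1,d_2\} = n-\max\{k_1,k_2\}+1$, so the $n-\min\{k_1,k_2\}+1$ appearing in the statement is accurate only when $k_1=k_2$ (the case actually used in Corollary~\ref{MDS_QUENTA_RS}), a point the paper's own proof passes over silently.
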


\begin{proof}
            From Corollary~\ref{RS_Dual}, we have that $Z(C_1^\perp) = \{n-b_1+1, n-b_1+2, \ldots, n-b_1+k_1\}$.
            First of all, notice that the restriction $b_1+b_2\leq k_2+1$ implies that the first element in
            the defining set of $Z(C_1^\perp)$ comes after the last element in $Z(C_2)$. Since that
            $0\leq b_1\leq k_1$, we have that $n-b_1+k_1\geq n$, which implies in a defining set for
            $C_1^\perp$ equals to $Z(C_1^\perp) = \{n-b_1+1, n-b+2, \ldots,n-1, 0, 1, \ldots, k_1-b_1\}$. Thus,
            $Z(C_1^\perp)$ intersects with $Z(C_2)$ if and only if $k_1-b_1\geq b_2$. In the case that it does,
            the intersection is equals to $Z(C_1^\perp)\cap Z(C_2) = k_1-(b_1+b_2)+1$. The missing claims are
            obtained using these results in Theorem~\ref{newConstruction:theorem1}.
\end{proof}

\begin{corollary}
            Let $C = RS_{k}(n,b)$ be a Reed-Solomon codes over $\mathbb{F}_q$
            with $0< b\leq (k+1)/2$ and $0<k<n\leq q$. Then there is an MDS QUENTA code with parameters
            $[[n,2b - 1, n-k+1; n+2b-2k-1]]_q$. In particular, for $b = (k+1)/2$, there
            is a maximal entanglement MDS QUENTA code.
            \label{MDS_QUENTA_RS}
\end{corollary}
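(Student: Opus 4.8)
The plan is to specialize Theorem~\ref{Theorem:RS_Euclid} to the case $C_1 = C_2 = RS_k(n,b)$, check that the hypotheses of that theorem are met, verify we land in its first case, and then confirm that the resulting quantum Singleton defect is zero. Setting $k_1 = k_2 = k$ and $b_1 = b_2 = b$, the standing hypotheses of Theorem~\ref{Theorem:RS_Euclid} read $0 \leq b \leq k$, $b \geq 0$, and $2b \leq k+1$. The condition $0 < b \leq (k+1)/2$ given here is exactly $2b \leq k+1$ together with positivity, so these are satisfied; in particular $2b \leq k+1 \leq k+k = 2k$ forces $b \leq k$, so $0 \leq b \leq k$ holds automatically and the hypotheses of the theorem are in force.

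**Next I would determine which of the two cases applies.** The dichotomy in Theorem~\ref{Theorem:RS_Euclid} is governed by comparing $k_1 - b_1$ with $b_2$, i.e. here comparing $k-b$ with $b$. The inequality $k - b \geq b$ is equivalent to $2b \leq k$, which follows from $b \leq (k+1)/2$ as long as we are not in the boundary situation; more carefully, $b \leq (k+1)/2$ gives $2b \leq k+1$, so $k - b \geq b - 1$, and one must check that the intersection behaves as in Case~1. The clean way to see this is to substitute $k_1 = k_2 = k$, $b_1 = b_2 = b$ directly into the parameter formula of Case~1, yielding length $n$, dimension $b_1 + b_2 - 1 = 2b - 1$, distance $n - \min\{k,k\} + 1 = n - k + 1$, and entanglement $c = n + b_1 + b_2 - k_1 - k_2 - 1 = n + 2b - 2k - 1$. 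This reproduces the claimed parameters $[[n, 2b-1, n-k+1; n+2b-2k-1]]_q$.

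**Then the MDS claim must be verified** against the quantum Singleton bound~\eqref{QSB}. I would compute $n - k_{\mathrm{quant}} + c = n - (2b-1) + (n + 2b - 2k - 1) = 2n - 2k = 2(n-k)$, so $\lfloor (n - k_{\mathrm{quant}} + c)/2 \rfloor + 1 = (n-k) + 1 = n - k + 1$, which equals the code distance $d = n - k + 1$ exactly. Hence the quantum Singleton defect is zero and the code is MDS. Finally, for the maximal-entanglement assertion with $b = (k+1)/2$, I would substitute into $c = n + 2b - 2k - 1 = n + (k+1) - 2k - 1 = n - k$, and compare with $n - k_{\mathrm{quant}} = n - (2b - 1) = n - k$; since $c = n - k_{\mathrm{quant}}$, the code is maximally entangled by the definition in Section~\ref{sec:Preliminaries}.

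**The main obstacle** I anticipate is not the arithmetic, which is routine, but pinning down the case analysis precisely at the boundary. Since $b$ may be a half-integer at the extreme $b=(k+1)/2$, I would need $k$ odd there so that $2b-1$ and the other quantities are integers, and I should confirm that $k - b \geq b$ (Case~1) rather than Case~2 holds throughout the stated range. When $2b = k$ we have $k - b = b$, landing exactly on the Case~1 boundary $k_1 - b_1 \geq b_2$; when $2b = k+1$ we have $k - b = b - 1 < b$, which nominally belongs to Case~2, so a careful check is needed to ensure the Case~1 formula still yields the correct intersection size and the stated parameters. I would reconcile this by recomputing $|Z(C_1^\perp)\cap Z(C_2)| = k - 2b + 1$ directly from the defining sets as in the proof of Theorem~\ref{Theorem:RS_Euclid}, confirming it is nonnegative precisely when $2b \leq k+1$, and thereby justifying the use of the Case~1 parameters across the full hypothesis range.
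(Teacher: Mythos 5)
Your proposal is correct and follows essentially the same route as the paper: both specialize Theorem~\ref{Theorem:RS_Euclid} to $C_1 = C_2 = RS_k(n,b)$ and observe that the boundary value $b=(k+1)/2$ (formally Case~2 of that theorem, since then $k-b<b$) still yields the stated parameters because the two cases' formulas coincide there, i.e.\ the intersection size $k-2b+1$ degenerates to $0$. Your additional explicit verifications of the quantum Singleton bound for the MDS claim and of $c=n-(2b-1)$ for maximal entanglement are details the paper leaves implicit, but they do not change the argument.
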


\begin{proof}
            Let $C_1 = C_2 = RS_{k}(n,b)$ in Theorem~\ref{Theorem:RS_Euclid}. Assuming $0\leq b<(k+1)/2$,
            we have that the classical codes attain the first case of Theorem~\ref{Theorem:RS_Euclid}; and for
            $b = (k+1)/2$, we are in the second case of Theorem~\ref{Theorem:RS_Euclid}. Thus,
            substituting the values of $k_1,k_2$ and $b_1,b_2$ by $k$ and $b$, respectively, the result follows.
\end{proof}

In a similar way, we can use BCH codes to construct QUENTA codes. The gain in using BCH codes is that the length
of the code is not bounded by the cardinality of the finite field used. However, creating classical or quantum codes from
BCH codes which are MDS is a difficult task. Our proposal to have BCH codes as the classical counterpart in this
paper is to show how to use two BCH codes to construct
QUENTA codes. In addition, it is also constructed maximal entanglement QUENTA codes. In order to do this, we show
suitable properties concerning some cyclotomic cosets for $n = q^2 - 1$.

\begin{lemma}
            Let $n = q^2 - 1$ with $q>2$. Then the $q$-ary coset $\mathbb{C}_0$ has one element, and
            $\mathbb{C}_i = \{i, iq\}$ for any $1 \leq i \leq q-1$.
            \label{lemma:coset}
\end{lemma}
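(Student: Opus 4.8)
The plan is to verify the two claims about cyclotomic cosets modulo $n = q^2-1$ directly from the definition of the $q$-ary cyclotomic coset $\mathbb{C}_i = \{i, iq, iq^2, \ldots\}$, where the coset terminates at the smallest $m_i$ with $iq^{m_i}\equiv i \pmod n$. The key arithmetic fact I would rely on is that $q^2 \equiv 1 \pmod n$, since $n = q^2 - 1$ divides $q^2 - 1$ trivially. This congruence controls how quickly the orbit of multiplication-by-$q$ closes up.

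\emph{First}, for the coset $\mathbb{C}_0$: since $0\cdot q = 0$, we have $m_0 = 1$ and $\mathbb{C}_0 = \{0\}$, so it has a single element. This is immediate.

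\emph{Next}, for $1 \leq i \leq q-1$, I would compute the orbit explicitly. The first element is $i$, the second is $iq$, and the third would be $iq^2 \equiv i \pmod n$ because $q^2 \equiv 1$. So the orbit closes after two steps \emph{provided} $iq \not\equiv i \pmod n$, which would guarantee $m_i = 2$ and hence $\mathbb{C}_i = \{i, iq\}$. The main obstacle — really the only subtlety — is ruling out the degenerate case where the coset has size one, i.e.\ showing $iq\not\equiv i\pmod n$ for $i$ in this range. I would argue this by noting $iq - i = i(q-1)$, and since $1 \leq i \leq q-1$ we have $0 < i(q-1) \leq (q-1)^2 < q^2 - 1 = n$ for $q > 2$; thus $i(q-1)$ is a nonzero residue modulo $n$, so $iq \not\equiv i$. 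A parallel quick check that the two listed elements $i$ and $iq$ are genuinely distinct residues (not just that the orbit length is $2$) follows from the same inequality. This completes both parts, and the restriction $q>2$ is exactly what makes the strict inequality $(q-1)^2 < q^2-1$ hold.

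I expect the whole argument to be short and computational; the one point requiring care is the size-one exclusion above, and secondarily confirming that $iq$ reduced modulo $n$ does not accidentally coincide with some other element forcing the coset smaller — but since the orbit under multiplication by $q$ is forced to return to $i$ at step $2$, the coset size is exactly the multiplicative order behavior we computed, and no shorter repetition is possible once $iq\neq i$.
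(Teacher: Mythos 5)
Your proof is correct and takes essentially the same route as the paper's: both hinge on $q^2 \equiv 1 \pmod{q^2-1}$, which forces the orbit of $i$ under multiplication by $q$ to close after at most two steps, and your explicit check that $iq \not\equiv i \pmod{n}$ is a distinctness point the paper's two-line proof leaves implicit. One minor slip: the inequality $(q-1)^2 < q^2-1$ holds for every $q\geq 2$, not only for $q>2$, so that hypothesis is not what makes it true---but this does not affect the validity of your argument under the stated assumptions.
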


\begin{proof}
            The first claim is trivial. For the second one, notice $iq^2 \equiv i \mod (q^2 - 1)$. Thus, the only elements in
            $\mathbb{C}_i$ are $i$ and $iq$, for $1 \leq i \leq q-1$.
\end{proof}

From Lemma~\ref{lemma:coset}, we can construct QUENTA codes with length $n = q^2 - 1$. See Theorem~\ref{newConstruction:EuclideanBCH}.

\begin{theorem}
            Let $n = q^2 - 1$ with $q>2$. Assume $a, b$ are integer such that $0\leq a \leq q-1$ and
            $1\leq b\leq q$.
            Then there is an QUENTA code with parameters
            \begin{itemize}
              \item $[[n, 2(q-b)-1, b+1; 2(q-a-1)]]_q$, if $a\geq q-b$ and $b<q$;
              \item $[[n, 2a+1, b+1; 2b-\lfloor \frac{b}{q}\rfloor]]_q$, if $a< q-b$.
            \end{itemize}
            \label{newConstruction:EuclideanBCH}
\end{theorem}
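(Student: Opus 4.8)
The plan is to construct two BCH codes of length $n=q^2-1$ over $\mathbb{F}_q$ whose defining sets are chosen as unions of the small cyclotomic cosets classified in Lemma~\ref{lemma:coset}, and then to feed them into Theorem~\ref{newConstruction:theorem1}, whose parameters are expressed entirely in terms of $k_1$, $k_2$, and the size of the intersection $Z(C_1^\perp)\cap Z(C_2)$. Reading off the target parameters, the natural choice is to take $C_1$ and $C_2$ to be narrow-sense BCH codes whose defining sets are $Z(C_1)=\bigcup_{i=1}^{a}\mathbb{C}_i$ and $Z(C_2)=\bigcup_{i=1}^{b}\mathbb{C}_i$ (with suitable conventions), so that the designed distances are controlled by the consecutive integers $1,\dots,a$ and $1,\dots,b$. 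By Lemma~\ref{lemma:coset}, each nonzero coset $\mathbb{C}_i=\{i,iq\}$ for $1\le i\le q-1$ has exactly two elements, so counting $|Z(C_1)|$ and $|Z(C_2)|$ reduces to counting how many indices $i$ in the relevant range are $\le q-1$ versus how they wrap around $\bmod\,(q^2-1)$.

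The first concrete step is to compute $k_1$ and $k_2$ via $\dim C_j=n-|Z(C_j)|$, being careful about the floor term $\lfloor b/q\rfloor$ that appears in the second case: this term is exactly the signal that when $b=q$ the coset structure degenerates (the index reaches $q$, and $qq=q^2\equiv 1$, so $\mathbb{C}_q$ collapses or overlaps), which is why the first bullet excludes $b=q$ while the second does not. I would handle the two regimes $a\ge q-b$ and $a<q-b$ separately, since the dividing line $a=q-b$ is precisely the threshold at which $Z(C_1^\perp)$ and $Z(C_2)$ begin (or cease) to overlap. The second step is to compute $Z(C_1^\perp)=\mathbb{Z}_n\setminus\{-i\mid i\in Z(C_1)\}$ using Proposition~\ref{Cyclic_Euclidean_Dual}, translating the cosets of $C_1$ into the negated/complemented index set, and then to intersect with $Z(C_2)$.

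The heart of the argument, and the step I expect to be the main obstacle, is the exact computation of $|Z(C_1^\perp)\cap Z(C_2)|$ in each regime. This is delicate because $Z(C_1^\perp)$ is a complement, so its elements are most of $\mathbb{Z}_n$, and one must carefully track which elements of the small set $Z(C_2)=\{\,i, iq : 1\le i\le b\,\}$ survive the removal of $\{-i : i\in Z(C_1)\}$; the negation map $i\mapsto -i=n-i$ interacts nontrivially with the doubling map $i\mapsto iq$ inside each coset, and one must verify that these index manipulations respect the $\bmod\,(q^2-1)$ wraparound without double-counting the coset elements $i$ and $iq$. Once this cardinality is pinned down — I would expect something like $2(a-(q-b))$ or a comparable linear expression in the overlapping regime, and $0$ in the non-overlapping regime — the entanglement parameter $c=n-k_2-|Z(C_1^\perp)\cap Z(C_2)|$ and the dimension $k_1-|Z(C_1^\perp)\cap Z(C_2)|$ follow by direct substitution into Theorem~\ref{newConstruction:theorem1}.

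Finally I would assemble the minimum distance from the BCH bound: since $Z(C_j)$ contains the $\,b\,$ (resp.\ $a$) consecutive integers starting at $1$, the designed distance gives $d_j\ge b+1$ (resp.\ $a+1$), and the quantum distance $\min\{d_1,d_2\}$ is claimed to equal $b+1$ in both bullets, which forces the convention $a\le b$ or an argument that the $C_2$-bound is the binding one; I would confirm this consistency against the stated ranges $0\le a\le q-1$, $1\le b\le q$ and verify that the two bullet-point parameter sets emerge exactly, checking the boundary case $a=q-b$ to ensure the two cases agree or partition the parameter space cleanly.
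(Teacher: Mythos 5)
Your overall skeleton --- two cyclic codes with coset-union defining sets of length $n=q^2-1$, counted via Lemma~\ref{lemma:coset} and fed into Theorem~\ref{newConstruction:theorem1} --- matches the paper, but the concrete choice of defining sets is structurally wrong, and the proof cannot be completed from it. The decisive constraint you overlooked is the claimed \emph{dimension}: in the regime $a<q-b$ the intersection $Z(C_1^\perp)\cap Z(C_2)$ is empty, so Theorem~\ref{newConstruction:theorem1} gives quantum dimension exactly $k_1$, which the statement says is $2a+1$. Hence $C_1$ must be a \emph{low-dimensional} code with $\dim C_1 = 2a+1$, i.e.\ one whose \emph{dual} has the small defining set. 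This is what the paper does: it prescribes $Z(C_1^\perp)=\bigcup_{i=0}^{a}\mathbb{C}_i$ (so $k_1=|Z(C_1^\perp)|=2a+1$), whereas your narrow-sense choice $Z(C_1)=\bigcup_{i=1}^{a}\mathbb{C}_i$ gives $k_1=n-2a\approx q^2$, and no computation of the (small) intersection with $Z(C_2)$ can bring $k_1-|Z(C_1^\perp)\cap Z(C_2)|$ down to $2a+1$ or $2(q-b)-1$. Your choice of $C_2$ is also misaligned: the paper takes $Z(C_2)=\bigcup_{i=1}^{b}\mathbb{C}_{q-i}$, cosets descending from $q-1$, which is what makes the overlap with $\bigcup_{i=0}^{a}\mathbb{C}_i$ begin exactly at the threshold $a\geq q-b$ and have cardinality $2(a-(q-b)+1)-\lfloor b/q\rfloor$; with your ascending cosets $1,\dots,b$ the overlap structure is entirely different (the two index sets meet only in accidental wrap-around coincidences such as $(q-1)q \equiv -(q-1) \bmod n$).

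Two further consequences of the mischoice confirm it cannot work. First, your explanation of the floor term is incorrect: $\lfloor b/q\rfloor$ does not come from $\mathbb{C}_q$ collapsing onto $\mathbb{C}_1$ (your set $\bigcup_{i=1}^{q}\mathbb{C}_i$ would have $2b-2$ elements when $b=q$, not $2b-1$); it comes from the singleton coset $\mathbb{C}_0$ entering $Z(C_2)=\bigcup_{i=1}^{b}\mathbb{C}_{q-i}$ precisely when $b=q$. Second, your distance analysis cannot close: with both codes narrow-sense you get $d_1\geq a+1$, $d_2\geq b+1$, and $\min\{d_1,d_2\}$ cannot be pinned to $b+1$ when $a<b$ --- you noticed this and proposed the convention $a\leq b$ (which is in fact backwards, and in any case the theorem's ranges $0\leq a\leq q-1$, $1\leq b\leq q$ impose no such relation). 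The paper's asymmetric choice dissolves this issue automatically: since its $C_1$ has tiny dimension $2a+1$, its minimum distance satisfies $d_1>b+1$, so the minimum is always $d_2=b+1$, independent of how $a$ and $b$ compare.
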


\begin{proof}
            First of all, assume that $C_1^\perp$ has defining set given by $Z(C_1^\perp) = \cup_{i=0}^a \mathbb{C}_i$ and
            the defining set of $C_2$ is equal to $Z(C_2) = \cup_{i = 1}^b \mathbb{C}_{q-i}$.
            From Lemma~\ref{lemma:coset} we have that $|Z(C_1^\perp)| = 2a + 1$ and $|Z(C_2)| = 2b - \lfloor \frac{b}{q}\rfloor$.
            Thus, the dimensions of $C_1$ and $C_2$ are equal to $k_1 = |Z(C_1^\perp)| = 2a + 1$ and
            $k_2 = n - |Z(C_2)| = n-2b+\lfloor \frac{b}{q}\rfloor$, respectively.
            To compute $|Z(C_1^\perp)\cap Z(C_2)|$, we have to consider two cases. If $a\geq q-b$, then we have that
            $Z(C_1^\perp)\cap Z(C_2) = \cup_{i = q-b}^a\mathbb{C}_{i}$, which has cardinality given by
            $|Z(C_1^\perp)\cap Z(C_2)| = 2(a-(q-b)+1) - \lfloor\frac{b}{q}\rfloor$, because $|\mathbb{C}_0| = 1$. On the other hand,
            if $a< q-b$, then $|Z(C_1^\perp)\cap Z(C_2)| = 0$. Lastly, since that $a,b\leq q$, $Z(C_1^\perp) = \cup_{i=0}^a \mathbb{C}_i$, 
            and $n = q^2 - 1$ with $q>2$, we can see
            that $d_1 > d_2 = b+1$. Now, applying these results to Theorem~\ref{newConstruction:theorem1}, we have that
            there is a QUENTA code with parameters $[[n, 2(q-b) - 1 + \lfloor \frac{b}{q}\rfloor, b+1; 2(q-a-1)]]_q$, if $a\geq q-b$, or
            a QUENTA code with parameters $[[n, 2a+1, b+1; 2b-\lfloor \frac{b}{q}\rfloor]]_q$.
\end{proof}

\subsection{Hermitian Construction}

In the same way as before, it possible to use cyclic codes to construct
QUENTA codes from the Hermitian construction method of Proposition~\ref{Prep:WildeHerm}.
See the following theorem.

\begin{theorem}
    Let $C$ be a cyclic code with parameters $[n,k,d]_{q^2}$. Then there is an QUENTA code
    with parameters $[[n,k-|Z(C^{\perp_h})\cap Z(C)|,d;n-k-|Z(C^{\perp_h})\cap Z(C)|]]_q$.
    \label{newConstruction:thm2}
\end{theorem}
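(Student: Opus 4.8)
The plan is to follow the proof of Theorem~\ref{newConstruction:theorem1} verbatim in structure, substituting the Hermitian construction of Proposition~\ref{Prep:WildeHerm} for the Euclidean one and the Hermitian dual of Proposition~\ref{Cyclic_Hermitian_Dual} in place of the Euclidean dual. The first step is to apply Proposition~\ref{Intersection_Cyclic} to the two cyclic codes $C^{\perp_h}$ and $C$, giving
\[
\dim\!\left(C^{\perp_h}\cap C\right) = n - \left|Z(C^{\perp_h})\cup Z(C)\right|.
\]
Expanding the union by inclusion--exclusion, I would write $|Z(C^{\perp_h})\cup Z(C)| = |Z(C^{\perp_h})| + |Z(C)| - |Z(C^{\perp_h})\cap Z(C)|$, so the whole computation reduces to knowing the two cardinalities $|Z(C)|$ and $|Z(C^{\perp_h})|$.

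The key step is to evaluate $|Z(C^{\perp_h})|$. Since $\dim C = k = n - |Z(C)|$, we have $|Z(C)| = n-k$ immediately. For the Hermitian dual, Proposition~\ref{Cyclic_Hermitian_Dual} gives $Z(C^{\perp_h}) = \mathbb{Z}_n\setminus\{-i \mid i\in qZ(C)\}$, so $|Z(C^{\perp_h})| = n - |\{-i \mid i\in qZ(C)\}|$. Here I would use that both multiplication by $q$ and negation are bijections of $\mathbb{Z}_n$ onto itself (this rests on $\gcd(q,n)=1$, which holds because the existence of the primitive $n$-th root of unity $\beta$ forces $n$ to be coprime to the characteristic), so that $|\{-i \mid i\in qZ(C)\}| = |qZ(C)| = |Z(C)| = n-k$ and hence $|Z(C^{\perp_h})| = k$. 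Feeding these two values into the inclusion--exclusion identity makes the leading terms cancel, yielding $\dim(C^{\perp_h}\cap C) = |Z(C^{\perp_h})\cap Z(C)|$.

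From here the result follows by bookkeeping. Proposition~\ref{Prep:WildeHerm} gives the entanglement as $c = \dim C^{\perp_h} - \dim(C^{\perp_h}\cap C) = (n-k) - |Z(C^{\perp_h})\cap Z(C)|$, which is exactly the claimed value. Substituting this $c$ into the dimension $2k - n + c$ from the same proposition produces $k - |Z(C^{\perp_h})\cap Z(C)|$, again as claimed, and the minimum distance $d' = d_H(C\setminus(C\cap C^{\perp_h}))$ is at least the minimum distance $d$ of $C$, since it is a minimum taken over a subset of the nonzero codewords of $C$. The only genuinely delicate point I anticipate is the bijection argument of the second paragraph: one must confirm that $\gcd(q,n)=1$ so that $i\mapsto -qi$ permutes $\mathbb{Z}_n$ and preserves cardinalities; everything else is the same routine substitution already carried out in the Euclidean case.
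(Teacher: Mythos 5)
Your proof is correct and follows essentially the same route as the paper's: apply Proposition~\ref{Intersection_Cyclic} with inclusion--exclusion to obtain $\dim(C^{\perp_h}\cap C)=|Z(C^{\perp_h})\cap Z(C)|$, compute $c$ from Proposition~\ref{Prep:WildeHerm}, and substitute. The only difference is that you explicitly justify $|Z(C^{\perp_h})|=k$ via the bijection $i\mapsto -qi$ on $\mathbb{Z}_n$ (resting on $\gcd(q,n)=1$), a step the paper's proof uses silently in its ``$k-k$'' cancellation.
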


\begin{proof}
            First of all, from Proposition~\ref{Intersection_Cyclic} we have
            $\dim(C^\perp\cap C) = n - |Z(C^\perp)\cup Z(C)| = n-|Z(C)|-|Z(C^{\perp_h})|+|Z(C^{\perp_h})\cap Z(C)| =
            k - k + |Z(C^{\perp_h})\cap Z(C)| = |Z(C^{\perp_h})\cap Z(C)|$. So,
            $c = \dim(C^{\perp_h}) - \dim(C^\perp\cap C) = n-k - |Z(C^{\perp_h})\cap Z(C)|$. Using a $[n,k,d]_{q^2}$ to
            construct a QUENTA codes via Proposition~\ref{Prep:WildeHerm}, we derive a code with parameters
            $[[n,k-|Z(C^{\perp_h})\cap Z(C)|,d;n-k-|Z(C^{\perp_h})\cap Z(C)|]]_q$.
\end{proof}

\begin{corollary}
    Let $C$ be an LCD cyclic code with parameters $[n,k,d]_{q^2}$. Then there is a
    maximal entanglement QUENTA code with parameters $[[n,k,d;n-k]]_q$.
    \label{newConstruction:corollary2}
\end{corollary}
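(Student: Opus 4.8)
The plan is to specialize Theorem~\ref{newConstruction:thm2} to the case of an LCD cyclic code, exactly as Corollary~\ref{corollary2} specializes Theorem~\ref{newConstruction:theorem1} in the Euclidean setting. The key observation is that the generic Hermitian construction produces a QUENTA code whose parameters depend on the single quantity $|Z(C^{\perp_h})\cap Z(C)|$, so the entire proof reduces to showing that this intersection is empty precisely when $C$ is Hermitian LCD.

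First I would recall the definition of a Hermitian LCD code: $C$ is Hermitian LCD when $hull_H(C) = C^{\perp_h}\cap C = \{\mathbf{0}\}$. The next step is to translate this trivial-hull condition into the language of defining sets. By Proposition~\ref{Intersection_Cyclic}, for cyclic codes the intersection $C^{\perp_h}\cap C$ is itself cyclic with defining set $Z(C^{\perp_h})\cup Z(C)$, and its dimension is $n - |Z(C^{\perp_h})\cup Z(C)|$. The intersection is trivial (dimension zero) if and only if $Z(C^{\perp_h})\cup Z(C) = \mathbb{Z}_n$, i.e. the two defining sets cover all of $\mathbb{Z}_n$. Combined with the computation already carried out inside the proof of Theorem~\ref{newConstruction:thm2}, namely $\dim(C^{\perp_h}\cap C) = |Z(C^{\perp_h})\cap Z(C)|$, this gives that $C$ is Hermitian LCD if and only if $|Z(C^{\perp_h})\cap Z(C)| = 0$.

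With that equivalence in hand, I would simply substitute $|Z(C^{\perp_h})\cap Z(C)| = 0$ into the parameter formula of Theorem~\ref{newConstruction:thm2}. The dimension becomes $k - 0 = k$, the minimum distance stays $d$, and the entanglement count becomes $c = n - k - 0 = n-k$. Since $c = n-k$ is exactly the maximal-entanglement condition recorded in the QUENTA-code definition, the resulting code $[[n,k,d;n-k]]_q$ is a maximal entanglement QUENTA code, which is the claim.

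I do not expect any serious obstacle here; the statement is a direct corollary and the only content is the dictionary between the trivial-hull condition and the emptiness of the defining-set intersection, which follows immediately from Proposition~\ref{Intersection_Cyclic}. The one point worth stating carefully is that the relevant dimension identity is the Hermitian analogue $\dim(C^{\perp_h}\cap C) = |Z(C^{\perp_h})\cap Z(C)|$ already established in the proof of Theorem~\ref{newConstruction:thm2}, rather than the Euclidean version; invoking that identity is what makes the substitution legitimate and keeps the argument self-contained.
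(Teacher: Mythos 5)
Your proposal is correct and follows essentially the same route as the paper: invoke the identity $\dim(C^{\perp_h}\cap C) = |Z(C^{\perp_h})\cap Z(C)|$ established inside the proof of Theorem~\ref{newConstruction:thm2}, note that the Hermitian LCD condition forces this quantity to vanish, and substitute zero into the parameters of that theorem to get $[[n,k,d;n-k]]_q$. The only difference is that you spell out the defining-set dictionary (via Proposition~\ref{Intersection_Cyclic}) in more detail than the paper does, which is a matter of exposition rather than substance.
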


\begin{proof}
            From the proof of Theorem~\ref{newConstruction:thm2}, we have that
            $\dim(C^{\perp_h}\cap C) = |Z(C^{\perp_h})\cap Z(C)|$. Since that $C$ is LCD,
            $|Z(C^{\perp_h})\cap Z(C)| = 0$ and the result follows from Theorem~\ref{newConstruction:thm2}.
\end{proof}

Differently of constructing QUENTA code from Euclidean dual cyclic code, the construction via Hermitian dual can be more
delicate, even for Reed-Solomon codes. Even so, we are going to show that is possible to construct QUENTA codes from a particular
case of Reed-Solomon codes and some cyclic codes.

%
%

\begin{theorem}
      Let $q$ be a prime power and assume $C = RS_{k}(n,1)$ be a Reed-Solomon codes over $\mathbb{F}_{q^2}$
      with $k = qt + r < q^2$, where $t\geq 1$ and $0 \leq r \leq q-1$, and $n = q^2$.
      Then we have the following:
      \begin{itemize}
	  \item If $t\geq q-r-1$, then there exists an MDS QUENTA code with parameters
	  $$[[q^2, (t+1)^2 - 2(q-r) + 1, q(q-t)-r+1; (q-t-1)^2+1]]_q.$$
	  \item If $t< q-r-1$, then there exists an MDS QUENTA code with parameters
	  $$[[q^2, t^2 - 1, q(q-t)-r+1; (q-t)^2 - 2r-1)]]_q.$$
      \end{itemize}
      \label{Theorem:RS_Hermit}
\end{theorem}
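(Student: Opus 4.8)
The plan is to obtain both codes as outputs of the Hermitian construction of Theorem~\ref{newConstruction:thm2}, applied to the single Reed--Solomon code $C = RS_k(q^2,1)$, so that the entire statement reduces to one defining-set computation. Setting $h := |Z(C^{\perp_h}) \cap Z(C)|$, Theorem~\ref{newConstruction:thm2} produces a QUENTA code with parameters $[[q^2,\, k-h,\, d;\, q^2-k-h]]_q$; since $C$ is MDS, its minimum distance $d = n-k+1 = q(q-t)-r+1$ is exactly the one claimed in both items. Hence the only nontrivial quantities to determine are $h$ as a function of $(q,t,r)$ and, at the very end, the Singleton defect.

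Before computing, I would address a genuine subtlety: with $n = q^2$ a power of the characteristic, $\mathbb{F}_{q^2}$ contains no primitive $q^2$-th root of unity, so $C$ is not a cyclic code in the literal sense of the earlier definitions. I would therefore interpret $RS_k(q^2,1)$ as an extended (generalized) Reed--Solomon code and first check that the defining-set descriptions of Proposition~\ref{Cyclic_Hermitian_Dual} still govern its Hermitian dual, adapting them if necessary. Granting this, Definition~\ref{Definition:RS} gives the consecutive interval $Z(C) = \{1,2,\ldots,n-k\}$ with $n-k = q^2-(qt+r) = q(q-t)-r$, and Proposition~\ref{Cyclic_Hermitian_Dual} gives $Z(C^{\perp_h}) = \mathbb{Z}_n \setminus \{-i : i \in q\,Z(C)\}$. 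Intersecting with $Z(C)$ collapses to $Z(C^{\perp_h}) \cap Z(C) = Z(C) \setminus (-q\,Z(C))$, so that $h = |Z(C)| - |Z(C) \cap (-q\,Z(C))|$ and the whole problem reduces to counting how many elements of the interval $\{1,\ldots,n-k\}$ reappear in the Frobenius-scaled set $-q\,Z(C)$.

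This count is the combinatorial heart of the proof and the step I expect to be the main obstacle. Using $-qj \equiv q\big((q-(j \bmod q)) \bmod q\big) \pmod{q^2}$, the set $-q\,Z(C)$ lies among the multiples of $q$, and I would determine exactly which of them fall inside $\{1,\ldots,n-k\}$ by sorting $j = 1,\ldots,n-k$ into residue classes modulo $q$ and carefully handling the last, possibly incomplete, block together with the exceptional class of $0$. The two regimes of the theorem correspond precisely to the two ways $Z(C)$ and its scaled image can meet: for $t < q-r-1$ the interval is long enough that the overlap grows and one finds $h = t(q-t)+r+1$, giving the dimension $k-h = t^2-1$; for $t \ge q-r-1$ the interval is short and the overlap saturates, yielding the complementary value of $h$ and dimension $(t+1)^2 - 2(q-r)+1$. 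Keeping the interval endpoints and the $\lfloor \cdot /q\rfloor$-type corrections consistent across the threshold $t = q-r-1$ is exactly where off-by-one errors lurk, so this is the part I would check most carefully.

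Finally, I would verify the MDS property. With $k' = k-h$ and $c = q^2-k-h$, substituting into the quantum Singleton bound \eqref{QSB} and using the short algebraic identity $n-k'+c = 2(n-k)$, valid in both cases, the bound collapses to $\big\lfloor (n-k'+c)/2\big\rfloor + 1 = \big\lfloor 2(n-k)/2\big\rfloor + 1 = (n-k)+1 = d$. Thus the Singleton defect is zero and both families are MDS QUENTA codes, which completes the argument.
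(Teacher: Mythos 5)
Your overall architecture is the same as the paper's: feed $C$ into the Hermitian construction of Theorem~\ref{newConstruction:thm2}, so that everything reduces to the single count $h=|Z(C^{\perp_h})\cap Z(C)|$, then use $d=n-k+1$ and check MDS-ness (your identity $n-k'+c=2(n-k)$ is correct and is a cleaner verification than anything the paper writes down; your flagging of the $\gcd(n,q)\neq 1$ issue is also legitimate, and the paper silently ignores it). The problem is that the one step carrying all the content --- determining $h$ in the two regimes --- is precisely the step you do not perform. You announce the answers ($h=t(q-t)+r+1$ for $t<q-r-1$, ``the complementary value'' otherwise) and defer the derivation, calling it the main obstacle. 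That derivation is the paper's entire proof: it rewrites $Z(C)$ and $Z(C^{\perp_h})$ explicitly in base-$q$ form, e.g.\ $Z(C^{\perp_h})=\{qi+j\mid 0\leq i\leq q-1,\ 0\leq j\leq t-1\}\cup\{qi+t\mid 0\leq i\leq r\}$, and reads off the intersection case by case. A plan that stops where the difficulty begins is not a proof.

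Moreover, the reduction you set up would, if executed faithfully, give the \emph{wrong} answer, so the gap is not merely an omission. Working literally modulo $n=q^2$, as you propose, the map $x\mapsto -qx$ is $q$-to-one: by your own congruence $-qj\equiv q\bigl((q-(j\bmod q))\bmod q\bigr)\pmod{q^2}$, the set $-qZ(C)$ collapses to at most $q$ elements, all multiples of $q$. Then $h=|Z(C)|-|Z(C)\cap(-qZ(C))|$ comes out far too large: for $q=4$, $t=2$, $r=0$ one gets $Z(C)=\{1,\dots,8\}$, $-qZ(C)=\{0,4,8,12\}$, hence $h=8-2=6$ and quantum dimension $k-h=2$, contradicting the claimed $t^2-1=3$ (and the paper's own example ${[[16,3,9;3]]}_4$). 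The reason is that Proposition~\ref{Cyclic_Hermitian_Dual} simply does not hold when $\gcd(n,q)\neq 1$; indeed $x^{q^2}-1=(x-1)^{q^2}$ over $\mathbb{F}_{q^2}$, so a length-$q^2$ code has no defining set at all, and your $Z(C)=\{1,\dots,n-k\}$ is purely formal. So the caveat you raised is not a side condition to be ``adapted if necessary''; resolving it \emph{is} the missing proof. (For what it is worth, the paper's explicit decompositions are only consistent as computations modulo $q^2-1$, i.e.\ for the cyclic Reed--Solomon code of length $q^2-1$; to honestly reach length $q^2$ one must either treat $C$ as an extended/generalized RS code and compute its Hermitian hull directly from generator matrices, or do the cyclic count mod $q^2-1$ and control what extension does to the hull. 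Neither your proposal nor the paper carries out that step.)
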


\begin{proof}
      Since that $C = RS_{k}(n,0)$, we have that $Z(C) = \{0, 1, 2, \ldots, n - k - 1\}$. From the proof of
      Theorem~\ref{Cyclic_Hermitian_Dual}, we also have that $Z(C^{\perp_h}) = qZ(C^\perp) = \{q, 2q, \ldots, kq\}$.
      From $n = q^2$ and $k = qt + r$, we can rewrite these two defining set as
      $Z(C) = \{qi+j|0\leq i \leq q - t - 2, 0\leq j \leq q-1\}\cup\{(q-t-1)q + j| 0\leq j \leq q-r-2\}$ and
      $Z(C^{\perp_h}) = \{qi+j|0\leq i \leq q-1, 0\leq j\leq t-1\}\cup \{qi+t|0\leq i \leq r\}$. Using this description,
      we can compute $|Z(C)\cap Z(C^{\perp_h})|$. To do so, we have to consider
      two cases separately, $t\geq q-r-1$ and $t< q-r-1$. For the first case, the intersection is given by the following set
      $Z(C)\cap Z(C^{\perp_h}) = \{qi+j| 0\leq i \leq q-t-2, 0\leq j \leq t\}\cup \{(q-t-1)q+j|0\leq j \leq q-r-2\}$. Thus,
      $|Z(C)\cap Z(C^{\perp_h})| = (q-t-1)(t+1) + q-r-1$. Similarly for the case $t< q-r-1$, we have
      $Z(C)\cap Z(C^{\perp_h}) = \{qi+j| 0\leq i \leq q-t-1, 0\leq j \leq t-1\}\cup \{qi+t| 0\leq i \leq r\}$, which implies in
      $|Z(C)\cap Z(C^{\perp_h})| = (q-t)t + r+1$. Applying the previous computations, and using the fact that
      $C$ has parameters $[q^2, k, q^2 - k + 1]_{q^2}$, to Theorem~\ref{newConstruction:thm2}, we have that there exists a QUENTA code with parameters

      \begin{itemize}
          \item $[[q^2, (t+1)^2 - 2(q-r) + 1, q(q-t)-r+1; (q-t-1)^2+1]]_q$, for $t\geq q-r-1$; and
          \item $[[q^2, t^2 - 1, q(q-t)-r+1; (q-t)^2 - 2r-1)]]_q$, for $t< q-r-1$.
      \end{itemize}
\end{proof}

\begin{theorem}
        Let $n = q^4 - 1$ and $q\geq 3$ a prime power. There exists an QUENTA code with parameters
        $[[n, n - 4(a-1)-3, d\geq a+1;1]]_q$, where $2\leq a\leq q^2 - 1$.
        \label{newConstruction:HermitianBCH}
\end{theorem}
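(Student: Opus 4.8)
The plan is to obtain the code from the single-code Hermitian construction of Theorem~\ref{newConstruction:thm2}, applied to a cyclic code $C$ over $\mathbb{F}_{q^2}$ of length $n = q^4 - 1 = (q^2)^2 - 1$, choosing its defining set so that the whole computation reduces to isolating one self-conjugate element. First I would record the relevant coset structure: because $n = (q^2)^2 - 1$, Lemma~\ref{lemma:coset} read with $q$ replaced by $q^2$ shows that the $q^2$-ary cosets modulo $n$ are $\mathbb{C}_0 = \{0\}$ and $\mathbb{C}_i = \{i, iq^2\}$ with $|\mathbb{C}_i| = 2$ for $1 \leq i \leq q^2 - 1$. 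I would also observe that $q^2(q^2+1) \equiv q^2 + 1 \pmod n$, so $\mathbb{C}_{q^2+1} = \{q^2+1\}$ is a singleton coset disjoint from all of $\mathbb{C}_0, \ldots, \mathbb{C}_{q^2-2}$.

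Next I would take $C$ to be the cyclic code with defining set $Z(C) = \mathbb{C}_0 \cup \mathbb{C}_1 \cup \cdots \cup \mathbb{C}_{a-1} \cup \mathbb{C}_{q^2+1}$. Counting coset sizes gives $|Z(C)| = 1 + 2(a-1) + 1 = 2a$, hence $\dim C = n - 2a$; and since $Z(C)$ contains the $a$ consecutive exponents $0, 1, \ldots, a-1$, the BCH bound yields $d \geq a+1$. Thus the classical input is an $[n, n-2a, \geq a+1]_{q^2}$ code, and the single extra singleton $\mathbb{C}_{q^2+1}$ is exactly the device that will later force one unit of entanglement.

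The heart of the proof is the entanglement count. By Proposition~\ref{Cyclic_Hermitian_Dual}, $Z(C^{\perp_h}) = \zn \setminus (-qZ(C))$, so $Z(C) \cap Z(C^{\perp_h}) = Z(C) \setminus (-qZ(C))$ and therefore $|Z(C) \cap Z(C^{\perp_h})| = 2a - |Z(C) \cap (-qZ(C))|$. The claim to establish is $|Z(C) \cap (-qZ(C))| = 1$. One element is free, since $0 = -q \cdot 0$ lies in both sets. To show there is no other, I would split $Z(C) \setminus \{0\}$ into the run $\{1, \ldots, a-1\}$, the conjugates $\{q^2, 2q^2, \ldots, (a-1)q^2\}$, and the singleton $\{q^2+1\}$, and verify for each $y$ that $-qy \bmod n$ cannot re-enter $Z(C)$: the images $-qj$ of the run and $-jq^3$ of the conjugates are pushed into the band just below $n$ (using $q^4 \equiv 1$ to reduce $jq^3$), while the multiples of $q^2$ and the element $q^2+1$ are excluded by a residue argument, as matching any of them forces the contradiction $0 \equiv -1 \pmod q$.

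Granting $|Z(C) \cap Z(C^{\perp_h})| = 2a - 1$, Theorem~\ref{newConstruction:thm2} delivers dimension $(n-2a) - (2a-1) = n - 4(a-1) - 3$, entanglement $c = \big(n - (n-2a)\big) - (2a-1) = 1$, and minimum distance $d \geq a+1$ carried over from $C$, which is precisely the claimed $[[n, n-4(a-1)-3, d \geq a+1; 1]]_q$ code for every $2 \leq a \leq q^2 - 1$. I expect the main obstacle to be exactly the exact-cardinality statement $|Z(C) \cap (-qZ(C))| = 1$: the risk is a second, accidental coincidence produced either by the wrap-around reduction modulo $q^4 - 1$ or by the interaction of the extra coset $\mathbb{C}_{q^2+1}$ with the conjugate block, so the magnitude and divisibility estimates must be done carefully and checked uniformly across the full range of $a$ (and this is where the hypothesis $q \geq 3$ enters).
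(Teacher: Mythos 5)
Your proposal is correct, and it rests on the same skeleton as the paper's proof: choose a cyclic code over $\mathbb{F}_{q^2}$ of length $n=q^4-1$ whose defining set is a union of $q^2$-ary cosets of total size $2a$ containing $a$ consecutive exponents, show that exactly one element of the defining set (namely $0$) lies in $-q$ times the defining set, and feed the outcome into Theorem~\ref{newConstruction:thm2}. The instantiation, however, is genuinely different. The paper works with $Z(C_a)=\mathbb{C}_0\cup\mathbb{C}_{q^2+1}\cup\bigl(\cup_{i=2}^{a}\mathbb{C}_{q^2+i}\bigr)$ (correcting the index typo $\mathbb{C}_{q^2+a}$ in the source), so its consecutive run is $q^2+1,\dots,q^2+a$ with $\mathbb{C}_{q^2+i}=\{q^2+i,\,iq^2+1\}$, and it imports both the coset sizes and the crucial identity $-qZ(C_a)\cap Z(C_a)=\mathbb{C}_0$ from \cite{Giuliano:2009}; you instead place the run at $0,1,\dots,a-1$, append the singleton $\mathbb{C}_{q^2+1}$, and prove the analogous identity $Z(C)\cap(-qZ(C))=\{0\}$ by hand. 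Your route buys self-containedness, and the case analysis does close by the magnitude-and-divisibility estimates you indicate; the only genuine obstruction is the congruence $-q(q^2+1)\equiv q^2+1\pmod{q^4-1}$, equivalently $(q+1)(q^2+1)\equiv 0\pmod{q^4-1}$, which holds exactly when $q=2$, so you correctly locate where $q\ge 3$ enters --- a constraint the paper's proof inherits invisibly through its citation, since its defining set contains the same singleton $\mathbb{C}_{q^2+1}$. One caveat: your remark that the singleton $\mathbb{C}_{q^2+1}$ ``forces one unit of entanglement'' inverts the mechanism. Since $c=(n-k)-|Z(C)\cap Z(C^{\perp_h})|=|Z(C)\cap(-qZ(C))|$, the single ebit comes from the self-paired coset $\mathbb{C}_0$; the singleton $\mathbb{C}_{q^2+1}$ is pure padding whose only role is to lower the quantum dimension by two so as to hit the stated value $n-4(a-1)-3$. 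Indeed, running your argument without it yields a code with the strictly better parameters $[[n,\,n-4(a-1)-1,\,d\ge a+1;\,1]]_q$, which shows that this method actually beats the theorem's parameters --- a point worth making explicit if you write this up in full.
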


\begin{proof}
        Let $C_a$ be a cyclic code with defining set $Z(C_a) = \mathbb{C}_0\cup\mathbb{C}_{q^2 + 1}\cup(\cup_{i=2}^a\mathbb{C}_{q^2 + a})$, for
        $2\leq a \leq q^2-1$. From Ref. \cite{Giuliano:2009}, we have that $\mathbb{C}_{q^2 + 1} = \{q^2 + 1\}$ and
        $\mathbb{C}_{q^2 + a} = \{q^2 + a, 1+aq^2\}$. It is trivial to show that $\mathbb{C}_{0} = \{0\}$.
        From $-qZ(C_a)\cap Z(C_a) = \mathbb{C}_0$ \cite{Giuliano:2009}, we can see that
        $Z(C_a^{\perp_h})\cap Z(C_a) = Z(C_a)\setminus\mathbb{C}_0$. Hence,
        $|Z(C_a^{\perp_h})\cap Z(C_a)| = 2(a-1)+1$. From the assumption of the defining set, the dimension
        and minimal distance of the classical code is $k = n - 2(a-1)-2$ and $d\geq a+1$, respectively.
        Thus, applying these quantities to Theorem~\ref{newConstruction:thm2}, we have that there exists an QUENTA code
        with parameters $[[n, n - 4(a-1)-3, d\geq a+1;1]]_q$.
\end{proof}

Two important statements can be said about Theorem~\ref{newConstruction:HermitianBCH}. Comparing the bound given for
the minimal distance and the Singleton bound for QUENTA codes,
we see that the difference between these two values is equal to $a-1$. So, for lower values of $a$ (such as $a = 2$ or $a = 3$)
the QUENTA codes have minimal distance close to optimal; e.g., if $a = 2$ (or $a = 3$), the family of QUENTA codes is almost MDS
(or almost near MDS). The second point is that the codes in Theorem~\ref{newConstruction:HermitianBCH} can be seen as
a generalization of the result by Qian and Zhang \cite{Qian:2019}.

In the following, we use LCD cyclic code to construct maximal entanglement QUENTA codes. The families obtained have
an interesting range of possible parameters.

\begin{theorem}
            Let $q$ be a prime power, $m\geq 2$, $2\leq \delta \leq q^{2\lceil\frac{m}{2}\rceil}+1$, and $\kappa = q^{2m}-2-2(\delta - 1 - \Big{\lfloor}\frac{\delta-1}{q^2}\Big{\rfloor})m$.
            Then,
            \begin{enumerate}
                \item For $m$ odd and $1\leq u\leq q-1$, there is a maxima entanglement QUENTA code with parameters
                $[[q^{2m}-1, k, d\geq \delta + 1 + \lfloor\frac{\delta-1}{q}\rfloor; q^{2m}-1-k]]_q$, where
                \begin{eqnarray}
                            k =
                            \left\{
                              \begin{array}{ll}
                                \kappa,               & \text{if } 2\leq \delta \leq q^m -1; \\
                                \kappa + u^2 m,       & \text{if } uq^m\leq \delta \leq (u+1)(q^m - 1); \\
                                \kappa + (u^2+2v+1)m, & \text{if } \delta = (u+1)(q^m - 1)+ v + 1\text{ for }0\leq v\leq u-1; \\
                                \kappa + q^2 m,       & \text{if } \delta = q^{m+1}\text{ or }q^{m+1}+1.
                              \end{array}
                            \right.
                            \label{Eq:CyclicHermitianEq1}
                \end{eqnarray}
                \item For $m$ even, there is an maximal entanglement QUENTA code with parameters
		\begin{equation}
			[[q^{2m}-1, \kappa, d\geq \delta + 1 + \lfloor\frac{\delta-1}{q}\rfloor;
			2(\delta - 1 - \lfloor\frac{\delta-1}{q^2}\rfloor)m+1]]_q.
			\label{Eq:CyclicHermitianEq2}
		\end{equation}
            \end{enumerate}
	    \label{newConstruction:maximalEntanglementCyclicCodes}
\end{theorem}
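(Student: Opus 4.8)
The plan is to obtain each family as a maximal entanglement QUENTA code produced by a single Hermitian LCD cyclic code over $\mathbb{F}_{q^2}$, invoking Corollary~\ref{newConstruction:corollary2}. From the computation in the proof of Theorem~\ref{newConstruction:thm2} together with Proposition~\ref{Cyclic_Hermitian_Dual}, for a cyclic code $C$ over $\mathbb{F}_{q^2}$ one has $Z(C^{\perp_h})\cap Z(C)=Z(C)\setminus(-qZ(C))$, so $C$ is Hermitian LCD exactly when $-qZ(C)=Z(C)$; i.e. when the defining set is invariant under multiplication by $-q$ modulo $n=q^{2m}-1$. Thus it suffices, in each case, to exhibit a $q^2$-ary cyclic code of length $n$ whose defining set is $(-q)$-invariant, compute its dimension $k=n-|Z(C)|$, and bound its minimum distance from below; Corollary~\ref{newConstruction:corollary2} then returns the parameters $[[n,k,d;n-k]]_q$.

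For the construction I would take $C$ to be the BCH-type code whose defining set is the smallest $(-q)$-invariant union of $q^2$-cyclotomic cosets modulo $n$ that contains the consecutive block $\{0,1,\dots,\delta-1\}$, namely $Z(C)=\bigcup_{i=0}^{\delta-1}\left(\mathbb{C}_i\cup\mathbb{C}_{-qi}\right)$. The identity $-qZ(C)=Z(C)$ is immediate, because $-q\,\mathbb{C}_i=\mathbb{C}_{-qi}$ and $(-q)^2\mathbb{C}_i=q^2\mathbb{C}_i=\mathbb{C}_i$, so multiplication by $-q$ only permutes the cosets that make up $Z(C)$. Hence $C$ is Hermitian LCD and the associated QUENTA code carries maximal entanglement $c=n-k$, which already matches the entanglement value claimed in both items.

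The dimension is the heart of the bookkeeping. Generalizing Lemma~\ref{lemma:coset} (the case $m=1$), I would first record the sizes of the $q^2$-cyclotomic cosets modulo $q^{2m}-1$: the generic size is $m=\mathrm{ord}_n(q^2)$, while $\mathbb{C}_0=\{0\}$; moreover $\mathbb{C}_{q^2 i}=\mathbb{C}_i$, so the seed elements divisible by $q^2$ contribute no new coset, which is the source of the $\lfloor(\delta-1)/q^2\rfloor$ correction. The hypothesis $\delta\leq q^{2\lceil m/2\rceil}+1$ is exactly what keeps every nonzero seed coset at the full size $m$. The factor $2$ in $|Z(C)|$ comes from the pairing $\mathbb{C}_i\leftrightarrow\mathbb{C}_{-qi}$; for $m$ even these two cosets are always distinct, giving the single clean value $\kappa$ and the entanglement $2(\delta-1-\lfloor(\delta-1)/q^2\rfloor)m+1$. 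For $m$ odd, by contrast, the half-period $q^m$ (a nontrivial square root of $1$ modulo $n$) makes certain cosets self-paired, $\mathbb{C}_{-qi}=\mathbb{C}_i$, precisely when $i$ crosses the thresholds $q^m$, $(u+1)(q^m-1)$, and $q^{m+1}$; tracking these coincidences across the range of $\delta$ is what produces the four-way split of $k$ in item (1). Counting $|Z(C)|$ coset by coset over the seed then yields each stated $k$.

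Finally, for the distance I would apply the BCH bound of Definition~\ref{Definition:BCH}. Since $\{0,1,\dots,\delta-1\}\subseteq Z(C)$ gives $\delta$ consecutive residues, and the $(-q)$-closure contributes $\lfloor(\delta-1)/q\rfloor$ further residues that extend this block (the images $-qj$ that wrap just below $0$), the defining set contains a run of at least $\delta+\lfloor(\delta-1)/q\rfloor$ consecutive elements, whence $d\geq\delta+1+\lfloor(\delta-1)/q\rfloor$. Combining the dimension count, this distance estimate, and the maximal entanglement granted by the LCD property, Corollary~\ref{newConstruction:corollary2} delivers the codes in both items. I expect the main obstacle to be twofold: first, the exact cyclotomic-coset bookkeeping across all ranges of $\delta$ and both parities of $m$ (especially isolating the self-pairings responsible for the four subcases); and second, justifying the sharpened BCH estimate, i.e. verifying that the $(-q)$-closure genuinely lengthens the consecutive block by $\lfloor(\delta-1)/q\rfloor$ rather than scattering the extra zeros elsewhere in the spectrum.
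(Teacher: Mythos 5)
Your architecture coincides with the paper's: both proofs obtain the codes by feeding a Hermitian LCD cyclic code over $\mathbb{F}_{q^2}$ into Corollary~\ref{newConstruction:corollary2}, and your LCD criterion is right --- by Proposition~\ref{Cyclic_Hermitian_Dual} and the hull computation in Theorem~\ref{newConstruction:thm2}, $\dim hull_H(C)=|Z(C)\setminus(-qZ(C))|$, so $C$ is Hermitian LCD iff $-qZ(C)=Z(C)$. The difference is that the paper's entire proof is a citation: it invokes Li~\cite{Li:2018} for the existence of LCD cyclic codes with parameters $[q^{2m}-1,k,\delta+1+\lfloor\frac{\delta-1}{q}\rfloor]_{q^2}$, $k$ as in Eqs.~(\ref{Eq:CyclicHermitianEq1})--(\ref{Eq:CyclicHermitianEq2}), and applies the corollary. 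You instead undertake to construct those LCD codes from scratch, which means the burden of the cited paper falls on you; the two items you defer as ``obstacles'' are not peripheral bookkeeping but the entire technical content, and they are left unproven.

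Two concrete gaps. First, the dimension formulas: you attribute the four-way split for $m$ odd solely to self-paired cosets $\mathbb{C}_{-qi}=\mathbb{C}_i$. Such self-pairings do occur (for $m$ odd, exactly when $q^m-1$ divides $i$ within the seed range), but they are not the whole story: there are also cross-coincidences between distinct seeds. For instance, taking $2\sigma-1=m$ one gets $(uq^m-1)+(q^m-u)q^m=q^{2m}-1\equiv 0$, hence $\mathbb{C}_{-q(uq^m-1)}=\mathbb{C}_{q^m-u}$; both seeds lie in $\{1,\dots,\delta-1\}$ precisely when $\delta\geq uq^m$, which is what actually triggers the second regime in Eq.~(\ref{Eq:CyclicHermitianEq1}). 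Without a complete classification of both kinds of coincidences (and a proof that no others occur, and that all surviving cosets have full size $m$), the stated values of $k$ are asserted rather than derived. Second, the distance bound: your parenthetical mechanism --- ``the images $-qj$ that wrap just below $0$'' --- fails as stated, since those residues are $n-q,n-2q,\dots$, spaced $q$ apart, and do not abut the block $\{0,\dots,\delta-1\}$. The bound is nevertheless true, via a different mechanism: for seeds divisible by $q$, say $j=tq\leq\delta-1$, one has $-qj=-tq^2\in\mathbb{C}_{-t}$, so $\mathbb{C}_{-qj}=\mathbb{C}_{-t}$ and hence $-1,-2,\dots,-\lfloor(\delta-1)/q\rfloor$ all lie in $Z(C)$, producing a consecutive run of length $\delta+\lfloor(\delta-1)/q\rfloor$ and the claimed $d\geq\delta+1+\lfloor\frac{\delta-1}{q}\rfloor$. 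So your skeleton is sound and matches the paper's, but as written the proposal neither completes the dimension count nor correctly justifies the distance, i.e.\ it does not yet replace the appeal to \cite{Li:2018}.
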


\begin{proof}
            From Li~\cite{Li:2018}, we have that there are LCD cyclic codes with parameters
            $[q^{2m}-1, k, \delta + 1 + \lfloor\frac{\delta-1}{q}\rfloor]_{q^2}$, where $k$ is the same as in
            Eqs.~\ref{Eq:CyclicHermitianEq1}~and~\ref{Eq:CyclicHermitianEq2} for $m$ odd and even, respectively. Thus,
            applying this LCD code to Corollary~\ref{newConstruction:corollary2} we obtain the mentioned codes.
\end{proof}

%
%

\section{Code Examples}
\label{sec:codComp}
In Tables~\ref{table2}, we present some MDS QUENTA codes obtained
from Corollary~\ref{MDS_QUENTA_RS} and Theorem~\ref{Theorem:RS_Hermit}. The codes in the first column are obtained from
the Euclidean construction and the ones in the second from the Hermitian construction. As can be seen, the later
one has a higher length when compared with the same field. So, they can be used in applications where the quantum system
has low degrees of freedom. On the other hand, the codes in the first column can reach values that the ones from the
Hermitian construction cannot. Thus, these two class of QUENTA codes are suitable for their specific applications.

\begin{table}[h]
\begin{center}
\caption{Some new MDS maximal entanglement QUENTA codes from Reed-Solomon codes\label{table2}}
\begin{tabular}{c|c}
\hline\hline
New QUENTA codes -- Corollary~\ref{MDS_QUENTA_RS} & New QUENTA codes -- Theorem~\ref{Theorem:RS_Hermit}\\
$[[n,2b - 1, n-k+1; n+2b-2k-1]]_q$                & $[[q^2, t^2 - 1, q(q-t)-r+1; (q-t)^2 - 2r-1)]]_q$  \\
$0< b\leq (k+1)/2$ and $0<k<n\leq q$           & $qt + r < q^2$, where $1\leq t< q-r-1$ and $0\leq r \leq q-1$\\
\hline \multicolumn{2}{c}{\hspace{-2.28cm}Examples}\\ \hline
\hline ${[[3, 1, 3; 2]]}_{3}$                     & ${[[16, 3, 9; 3]]}_{4}$ \\
\hline ${[[4, 3, 2; 1]]}_{4}$                     & ${[[64, 35, 17; 3]]}_{8}$ \\
\hline ${[[7, 3, 5; 4]]}_{7}$                     & ${[[64, 15, 31; 11]]}_{8}$ \\
\hline ${[[8, 5, 4; 3]]}_{8}$                     & ${[[256, 196, 33; 3]]}_{16}$ \\
\hline ${[[11, 9, 3; 2]]}_{11}$                   & ${[[256, 120, 78; 18]]}_{16}$ \\
\hline ${[[13, 9, 5; 4]]}_{13}$                   & ${[[1024, 784, 129; 15]]}_{32}$ \\
\hline ${[[16, 13, 3; 2]]}_{16}$                  & ${[[1024, 624, 220; 38]]}_{32}$ \\
\hline
\end{tabular}
\end{center}
\end{table}

One family of QUENTA codes derived from BCH codes have been constructed, see Theorem~\ref{newConstruction:EuclideanBCH}.
Some examples of these QUENTA codes are shown in Table~\ref{table3}. As can be seen from Table 1 in
Ref.~\cite{Luo:2019} (and the reference there in), the QUENTA codes derived from
Theorem~\ref{newConstruction:EuclideanBCH} have new parameters when compared with QUENTA codes in the literature.
So, even not having good parameters as the ones in our Table~\ref{table2}, these codes are new.
One advantage of our codes from the ones in the literature is that, since they were constructed from two
BCH codes not only one as is commonly found in the literature, we have more liberty in the choice of parameters.
Such property can help to adjust the QUENTA code to the framework where it will be used.
Please, see Table~\ref{table3} for some examples.

\begin{table}[h]
\begin{center}
\caption{Some new QUENTA codes from BCH codes\label{table3}}
\begin{tabular}{c}
\hline\hline
New QUENTA codes -- Theorem~\ref{newConstruction:EuclideanBCH}\\
$[[q^2 - 1, 2a+1, b+1; 2b-\lfloor \frac{b}{q}\rfloor]]_q$\\
$1\leq b\leq q$ and $0\leq a< q-b$\\
\hline Examples\\ \hline
\hline ${[[15, 5, 2; 2]]}_{4}$\\
\hline ${[[48, 9, 3; 4]]}_{7}$\\
\hline ${[[63, 7, 5; 8]]}_{8}$\\
\hline ${[[80, 13, 3; 4]]}_{9}$\\
\hline ${[[255, 19, 7; 12]]}_{16}$\\
\hline
\end{tabular}
\end{center}
\end{table}

The remaining QUENTA codes constructed in this paper are the ones derived from cyclic codes that are neither
Reed-Solomon nor BCH codes. Two families of such codes were created. Both of them using Hermitian construction.
Some examples of parameters that can be obtained from these codes are presented in Table~\ref{table4}. Codes in the
first column are almost MDS or almost near MDS; i.e., the Singleton defect, which is the difference between the quantum
Singleton bound (QSB) presented in Eq.~\ref{QSB} and the minimal distance of the code, is equal to one or two units.
Lastly, it is displayed in the second column of Table~\ref{table4} some codes from
Theorem~\ref{newConstruction:maximalEntanglementCyclicCodes}. All codes in
Theorem~\ref{newConstruction:maximalEntanglementCyclicCodes} are maximal entanglement. Thus, they can be used to
achieve the hashing bound \cite{Li:2014}. Having length proportional to a high power of the cardinality of the field,
it is expected to achieve low error probability using these codes. Comparing our parameters with the ones in
the literature \cite{Lu:2018,Guo:2015,Lu:2015,Lv:2015}, we see that our codes are new.

\begin{table}[h]
\begin{center}
\caption{Some QUENTA codes from Cyclic Codes via Hermitian Construction\label{table4}}
\begin{tabular}{c|c}
\hline\hline
New QUENTA codes -- Theorem~\ref{newConstruction:HermitianBCH}                 & New QUENTA codes -- Theorem~\ref{newConstruction:maximalEntanglementCyclicCodes}    \\
\hline \multicolumn{2}{c}{Examples}\\ \hline
\hline ${[[80, 73, 3; 1]]}_{3}$                                                & ${[[80, 42, 14; 38]]}_{3}$ \\
\hline ${[[80, 69, 4; 1]]}_{3}$                                                & ${[[80, 50, 10; 30]]}_{3}$ \\
\hline ${[[255, 248, 3; 1]]}_{4}$                                              & ${[[255, 193, 20; 62]]}_{4}$ \\
\hline ${[[255, 244, 4; 1]]}_{4}$                                              & ${[[255, 237, 7; 18]]}_{4}$ \\
\hline
\end{tabular}
\end{center}
\end{table}

\section{Conclusion}
\label{sec:Conclusion}
This paper has been devoted to the use of cyclic codes in the construction of QUENTA codes.
General construction methods of QUENTA codes from cyclic codes via defining sets
have been presented, using both Euclidean and Hermitian dual of the classical codes.
As an application of these methods, five families of QUENTA codes were created. Two
of them were derived from Reed-Solomon codes, which resulted in MDS codes. An additional
family of almost MDS or near almost MDS QUENTA codes was derived from general cyclic codes.
One of the remaining family used BCH codes as the classical counterpart. The construction of
this family of QUENTA code used two BCH codes, which provided more liberty in the parameters of
the quantum code. Lastly, we construct a family of maximal entanglement QUENTA code that can be
useful in reaching the hashing bound.


\section{Acknowledgements}
I am grateful to Ruud Pellikaan for proposing this research problem and for
interesting discussions which helped me to clarify some points of view.
This work was supported by the \emph{Conselho Nacional de Desenvolvimento
Cient\'ifico e Tecnol\'ogico}, grant No. 201223/2018-0.

\bibliographystyle{splncs04}
\bibliography{ref}

\end{document}